\newcolumntype{Y}{>{\centering\arraybackslash}X}
\newcolumntype{f}{>{\centering\arraybackslash}X}
\newcolumntype{h}{>{\hsize=.5\hsize\centering\arraybackslash\extracolsep{.1em}}X}
\newcolumntype{C}[1]{>{\hsize=#1\hsize\centering\arraybackslash}X}
\renewcommand{\algocf@captiontext}[2]{#1\algocf@typo. \AlCapFnt{}#2} 
\def\@algocf@capt@plain{top}
\renewcommand{\algocf@makecaption}[2]{%
  \addtolength{\hsize}{\algomargin}%
  \sbox\@tempboxa{\algocf@captiontext{#1}{#2}}%
  \ifdim\wd\@tempboxa >\hsize
    \hskip .5\algomargin%
    \parbox[t]{\hsize}{\algocf@captiontext{#1}{#2}}
  \else%
    \global\@minipagefalse%
    \hbox to\hsize{\box\@tempboxa}
  \fi%
  \addtolength{\hsize}{-\algomargin}%
}
\def\Bka{{\it Biometrika}}
\newcommand{\bthm}{\begin{theorem}}
\newcommand{\ethm}{\end{theorem}}
\newcommand{\bpf}{\begin{proof}}
\newcommand{\epf}{\end{proof}}
\newcommand{\tdeg}{\texttt{deg}}
\begin{document}

\jname{Biometrika}
\endpage{17}

\markboth{S. MUKHOPADHYAY \and K. WANG}{Nonparametric  High-dimensional K-sample Comparison}

\title{A Nonparametric Approach to High-dimensional\\k-sample Comparison Problems}

\author{SUBHADEEP MUKHOPADHYAY}
\affil{Stanford University, Department of Statistics, Stanford, CA 94305, USA \email{deep@unitedstatalgo.com}}

\author{and KAIJUN WANG}
\affil{Fred Hutchinson Cancer Research Center, Seattle, WA 98109, USA \email{kaijunwang.19@gmail.com}}

\maketitle

\begin{abstract}
High-dimensional k-sample comparison is a common applied problem. We construct a class of easy-to-implement, distribution-free tests based on new nonparametric tools and unexplored connections with spectral graph theory. The test is shown to possess various desirable properties along with a characteristic exploratory flavor that has practical consequences for statistical modeling. The numerical examples show that our method works surprisingly well under a broad range of realistic situations.
\end{abstract}
\begin{keywords}
Graph-based nonparametrics; High-dimensional k-sample comparison; Spectral graph partitioning; Distribution-free methods; High-dimensional exploratory analysis.
\end{keywords}
\section{Introduction}
\subsection{The Problem Statement}
The goal of this paper is to introduce a new theory for nonparametric multivariate $k$-sample problems. Suppose $k$ mutually independent random samples with sizes $n_1,\ldots n_k$ are drawn, respectively, from the unknown distributions $G_i(x),$ $i=1,\ldots,k$ in $\mathbb{R}^d$. We wish to test the following null hypothesis:
\[H_0: G_1(x) = \cdots = G_k(x) ~~\text{for all}~ x,\]
with the alternative hypothesis being $H_{{\rm A}}: G_r(x) \neq G_s(x)$ for some $r \neq s \in \{1,\ldots,k\}$. This fundamental comparison problem frequently appears in a wide range of data-rich scientific fields: astrophysics, genomics, neuroscience, econometrics, and chemometrics, just to name a few.
\subsection{Existing Graph-based Nonparametric Approaches ($k=2$)} 

Graphs are becoming an increasingly useful tool for nonparametric multivariate data analysis. Their significance has grown considerably in recent years due to their exceptional ability to tackle high-dimensional problems, especially when the dimension of the data is much larger than the sample size--a regime where the classical multivariate rank-based k-sample tests \citep{oja2004,puri1993book} are not even applicable. 
Lionel \cite{weiss1960two} made an early attempt to generalize classical nonparametric two-sample procedures in multivariate settings by utilizing nearest-neighbor-type graphs. Although the idea was intriguing, the main impediment to its use came from the fact that the null distribution of the proposed test statistic was distribution-dependent.  The drawback was later addressed by \cite{friedman1979}, who provided the first practical minimal spanning tree based algorithm, known as the edge-count test. This line of work led to a flurry of research, resulting in some promising new generalizations. For example, \cite{chen2016} noted that the edge-count test has ``low or even no power for scale alternatives when the dimension is moderate to high unless the sample size is astronomical due to the curse-of-dimensionality.'' To counter this undesirable aspect, the authors proposed a generalized edge-count test based on a clever Mahalanobis distance-based solution for two-sample location-scale alternatives. This was further extended in \cite{chen2016weighted} to tackle another shortcoming of Friedman and Rafsky's approach, namely the sample imbalance problem. The authors apply appropriate weights to different components of the classical edge-count statistics to construct a weighted edge-count test. \cite{rosenbaum2005} proposed a test based on minimum distance non-bipartite pairing, and \cite{biswas2014} utilized the concept of shortest Hamiltonian path of a graph for distribution-free two-sample problems.

Broadly speaking, all of the above graph-based methods share the following characteristics:
\vskip.1em
\begin{itemize}[topsep=2pt]
  \setlength{\itemsep}{1.65pt}
\item Define the pooled sample size $n=n_1+\cdots+n_k$. For $i=1,\ldots,n$, let $x_i$ denotes the feature vector corresponding to the $i$-th row of the $n\times d$ data matrix $X$. Construct a weighted undirected graph $\mathcal{G}$ based on pairwise Euclidean distances between the data points $x_i \in \mathbb{R}^d$ for $i=1,\ldots,n$.
\item Compute a subgraph $\mathcal{G}^*,$ containing certain subset of the edges from the original weighted graph $\mathcal{G}$ using some optimality criterion such as shortest Hamiltonian path or minimum non-bipartite matching or minimum weight spanning tree, which often requires sophisticated optimization routines like Kruskal's algorithm or Derig’s shortest augmentation path algorithm for efficient implementation. 
\item Compute cross-match statistics by counting the number of edges between samples from two different populations.
\item The common motivation behind all of these procedures is to generalize the univariate Wald-Wolfowitz runs test \citep{wald1940} for large dimensional problems.
\end{itemize}
\vskip.2em
The general idea here is to reformulate the multivariate comparison problem as a study of the structure of a specially designed graph, one constructed from the given data; see Section \ref{sec:lpgk} for more details. 
\vspace{-.65em}
\subsection{Some Desiderata and Goals} 
\label{sec:desiderata}
The main challenge is to design a nonparametric comparison test that continues to work for high-dimensional (and low-sample size) multi-sample cases, and is distribution-free.  In particular, an ideal nonparametric multivariate comparison test should 
\vskip.25em
(D1.) be robust, andnot unduly influenced by outliers. This is an important issue, since detecting outliers is quite difficult in large-dimensional settings. Figs \ref{fig:sim1} (e) and (f) show that current methods perform poorly for datasets contaminated by even a small percentage of outliers. 
\vskip.25em
(D2.)  allow practitioners to systematically construct tests for high-order alternatives, beyond conventional location-scale. Figs \ref{fig:sim1} (c) and (d) demonstrate the need for such a test.  Complex ``shape'' differences frequently arise in the real-world; see, for example: p53 geneset enrichment data of Section \ref{sec:p53}.
\vskip.25em
(D3.) be valid for any combination of discrete and continuous covariates--another common trait of real multivariate data such as the Kyphosis data in Fig \ref{fig:real1} (b), where existing methods perform poorly. The same holds true for Figs \ref{fig:sim1} (g) and (h), which further reinforce the need to develop mixed-data algorithms.
\vskip.25em
(D4.) provide the user with some insight into why the test was rejected. In fact, data scientists need to be able to see not just the final result as a form of a single p-value, but also to understand and interpret why they are arriving at that particular conclusion, as in tables \ref{table1.1:lpchart}-\ref{table:brain}. These additional insights might lead to enhanced predictive analytics at the next modeling phase, as discussed in Sec. \ref{sec:brain} for brain tumor data with $k=5, d=5597$, and $n=45$.
\vskip.25em
(D5.) work for general $k$-sample problems. Surprisingly, all currently available graph-based comparison tests work only for two-sample problems. The main reason for this is that all of the existing methods aim to generalize the univariate run test to high dimension. Interestingly, the $k$-sample generalization of the run test is known to be hopelessly complicated 
even in the univariate case--so much so that \cite{mood1940}, in discussing the asymptotic distribution of run test for more than two samples, has not hesitated to say ``that such a theorem would hardly be useful to the statistician, and the author does not feel that it would be worthwhile to go through the long and tedious details merely for the sake of completeness.'' A convenient method that works for general $k$-sample comparisons would thus be highly valuable for practical reasons.
\vskip.25em
In this paper, we begin the process of considering an alternative class of nonparametric test based on new tools and ideas, which we hope will be useful in light of D1-D5. 
\section{Theory and Methods}
\label{sec:theory}
\subsection{Nonlinear Data-Transformation}
Suppose we are given $\{(Y_i, X_i):i=1,\ldots, n\}$ where $Y_i \in \{1,\ldots,k\}$ denotes the class membership index, and $X_i \in \mathbb{R}^d$ is the associated multidimensional feature. $n_g$ is the number of samples from class $g$ and $n=\sum_{g=1}^k n_g$. Our first task will be to provide a universal mechanism for constructing a new class of nonparametric data-graph kernel, useful for $k$-sample problems.
\vskip.5em
By $\{F_j\}_{j=1}^d$ we denote the marginal distributions of a d-variate random vector. Given $X_1,\ldots,X_n$ random sample from $F_j$ construct the polynomials $\{T_\ell(X;\wtF_j)\}_{\ell\ge 1}$ for the Hilbert space $\cL^2(\wtF_j)$ by applying Gram-Schmidt orthonormalization on the set of functions $\{\zeta,\zeta^2,\ldots\}$: 
\beq \zeta(x;\wtF_j)~=~\dfrac{\sqrt{12}\big\{\tFm_j(x) - 1/2\big\}}{\sqrt{1-\sum_{x \in \mathscr{U}} \widetilde{p}_j^3(x)}},\eeq
$\widetilde{p}_j$ and $\wtF_j$ denote the empirical probability mass function and distribution function, respectively; $\tFm_j(x)=\wtF_j(x)- .5 \widetilde{p}_j(x)$ is known as the mid-distribution transform; $\mathscr{U}$ denotes the set of all distinct observations of $X$. We call this system of specially-designed orthonormal polynomials of mid-distribution transforms as empirical LP-basis function. Two notable points: First, the number of LP-basis functions $m$ is always less than $|\mathscr{U}|$. As an example, for $X$ binary, we can construct at most one LP-basis. Second, the shapes of these empirical-polynomials are not fixed, they are data-adaptive which make them inherently nonparametric by design. The nomenclature issue of LP-basis is discussed in Supplementary Material S1.
\subsection{Theoretical Motivation} To appreciate the motivation behind this nonparametrically designed system of orthogonal functions, consider the univariate two-sample ($Y,X$) problem. We start by deriving the explicit formulae of the LP-basis functions for $X$ and $Y$. 

\begin{theorem} \label{thm:1}
Given independently drawn $\{(Y_i,X_i), i=1,\ldots,n=n_1+n_2\}$ where $Y\in \{0,1\}$ and $X \in \mathbb{R}$ with the associated pooled rank denoted by $R_i ={\rm rank}(X_i)$, the first few empirical LP-polynomial bases for $X$ and $Y$ are given by:
\beq \label{eq:yLP}  T_1(y_i;\wtF_Y) ~=~ \left\{ \begin{array}{rl}
 -\sqrt{\dfrac{n_2}{n_1}} &\mbox{for $i=1,\ldots,n_1$} \\
 \sqrt{\dfrac{n_1}{n_2}} &\mbox{for $i=n_1+1,\ldots,n$.~~~~~~~~~~~~~\,}
       \end{array} \right. \eeq
\beq \label{eq:xLP} T_\ell(x_i; \wtF_X) ~=~ \left\{\begin{array}{rl}
 &\dfrac{\sqrt{12}}{n}\Big( R_i - \dfrac{n+1}{2} \Big) ~~~\mbox{for $\ell=1$} \\[.74em]
 &\dfrac{6\sqrt{5}}{n^2} \Big( R_i - \dfrac{n+1}{2} \Big)^2 - \dfrac{\sqrt{5}}{2}~~~ \mbox{for $\ell=2$.~~~~}
       \end{array} \right. \eeq
       where $\ell$ denotes the order of the polynomial.
\end{theorem}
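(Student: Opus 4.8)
The plan is to verify \eqref{eq:yLP} and \eqref{eq:xLP} by direct computation from the Gram--Schmidt definition of the empirical LP-basis, handling the binary label $Y$ and the continuous feature $X$ separately. In both cases the template is identical: write down the empirical pmf and mid-distribution transform, evaluate the normalizing constant $\{1-\sum \widetilde{p}^3\}^{1/2}$ in closed form, and then run as many Gram--Schmidt steps as the number of requested polynomials.

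For $Y$, I would first record $\widetilde{p}_Y(0)=n_1/n$, $\widetilde{p}_Y(1)=n_2/n$, and evaluate $\tFm_Y-1/2$ at the two atoms, obtaining $-n_2/(2n)$ and $n_1/(2n)$. Because $Y$ is binary, the remark following the definition guarantees a single LP-basis function, so no further orthogonalization is needed and $T_1(y_i;\wtF_Y)=\zeta(y_i;\wtF_Y)$. The only genuine algebra is the normalizer: using $n=n_1+n_2$ one finds $1-\sum_y \widetilde{p}_Y^3 = (n^3-n_1^3-n_2^3)/n^3 = 3n_1n_2/n^2$, after which substituting into $\zeta$ and simplifying the factor $\sqrt{12}=2\sqrt3$ collapses the two atom-values to exactly $-\sqrt{n_2/n_1}$ and $\sqrt{n_1/n_2}$, which is \eqref{eq:yLP}.

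For continuous $X$ (no ties) every observation carries a distinct rank, so $\widetilde{p}_X\equiv 1/n$ and $\tFm_X(x_i)=(R_i-1/2)/n$, giving $\tFm_X(x_i)-1/2=(R_i-(n+1)/2)/n$ and $1-\sum_x \widetilde{p}_X^3 = 1-n^{-2}$. For $\ell=1$ this directly yields $T_1=\zeta=\sqrt{12}\,(R_i-(n+1)/2)/\sqrt{n^2-1}$, which to leading order in $n$ is the stated $\sqrt{12}\,(R_i-(n+1)/2)/n$. For $\ell=2$ I would perform one Gram--Schmidt step on $\zeta^2$: projecting out the constant removes $E[\zeta^2]=1$, while the projection onto $T_1$ vanishes since $E[\zeta^3]=0$ by the symmetry of the centered ranks $R_i-(n+1)/2$ about zero. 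Hence the unnormalized second basis is $\zeta^2-1$, whose squared norm I would evaluate from the fourth central moment of the discrete uniform on $\{1,\dots,n\}$, namely $\mu_4=(n^2-1)(3n^2-7)/240$; this gives $E[\zeta^4]=3(3n^2-7)/\{5(n^2-1)\}$ and $\|\zeta^2-1\|^2=4(n^2-4)/\{5(n^2-1)\}$. Dividing $\zeta^2-1$ by its norm and expanding $\zeta^2=12\,(R_i-(n+1)/2)^2/(n^2-1)$ produces a quadratic-in-rank polynomial whose leading-order form is exactly \eqref{eq:xLP}.

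The main obstacle is the $\ell=2$ bookkeeping: carrying the exact normalizers $\sqrt{n^2-1}$ and $\sqrt{n^2-4}$ through the final division and reconciling them with the clean coefficients $6\sqrt5/n^2$ and $\sqrt5/2$ in the statement. All such gaps are of relative order $O(n^{-2})$, so I would present \eqref{eq:xLP} (and the $1/n$ normalizer in $T_1$) as the leading-order expressions, and I would state the symmetry argument---that the odd empirical moments of the centered ranks vanish---most carefully, since it is precisely what truncates the Gram--Schmidt recursion and keeps the second-order polynomial free of a linear-in-rank term.
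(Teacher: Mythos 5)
Your proposal is correct and follows essentially the same route as the paper's proof: direct evaluation of the empirical pmf, mid-distribution transform, and normalizer $1-\sum \widetilde{p}^3$ for the binary $Y$, then identification of $T_1$ with the standardized centered ranks and a Gram--Schmidt step to obtain $T_2$ for the continuous $X$, with all formulas holding up to asymptotically negligible factors. The only difference is one of bookkeeping: where the paper labels the $\ell=2$ step ``routine calculation,'' you make it explicit (orthogonality to $T_1$ via the vanishing odd moments of the centered ranks, and the exact norm via the fourth central moment $(n^2-1)(3n^2-7)/240$ of the discrete uniform), which reconciles with the stated coefficients $6\sqrt{5}/n^2$ and $\sqrt{5}/2$ up to the same $O(n^{-2})$ relative errors the paper also neglects.
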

\begin{definition} \label{def:lpc}
Define LP-comeans as the following cross-covariance inner product 
\beq \label{eq:defLPcom} \LP[j,\ell;Y,X]\,=\,\Ex[T_j(Y;F_Y) T_\ell(X;F_X)] ,~~j,\ell>0.\eeq 
\end{definition}
One can easily estimate the LP-comeans by substituting its empirical counterpart: $\hLP[j,\ell;Y,X]=n^{-1}\sum_{i=1}^n T_j(y_i;\wtF_Y) T_\ell(x_i;\wtF_X)$, which can be computed in \texttt{R} by $\texttt{cov}\{T_j(y;\wtF_Y), T_\ell(x;\wtF_X)\}$. This immediately leads to the following surprising identity.
\begin{theorem} \label{thm:wm}
A compact LP-representation of the Wilcoxon and Mood Statistic is given by
\vskip.6em
$~~~~~~~~~~\text{$\hLP[1,1;Y,X]~~\equiv~~$ Wilcoxon statistic for testing equality of location or means},$
\vskip.6em
$~~~~~~~~~~\text{$\hLP[1,2;Y,X]~~\equiv~~$ Mood statistic for testing equality of variance or scale.}~~$
\end{theorem}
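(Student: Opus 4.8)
The plan is to prove both equivalences by direct substitution of the closed-form LP-bases from Theorem \ref{thm:1} into the empirical LP-comean $\hLP[j,\ell;Y,X]=n^{-1}\sum_{i=1}^n T_j(y_i;\wtF_Y)T_\ell(x_i;\wtF_X)$, and then to collapse the resulting rank sums using two elementary identities. The first is that the $Y$-basis is centered, $\sum_{i=1}^n T_1(y_i;\wtF_Y)=n_1(-\sqrt{n_2/n_1})+n_2\sqrt{n_1/n_2}=0$; the second is that the centered pooled ranks sum to zero, $\sum_{i=1}^n\{R_i-(n+1)/2\}=0$, since $R_1,\dots,R_n$ is a permutation of $1,\dots,n$. (Indeed, because both factors are empirically centered, $\hLP[1,\ell;Y,X]$ is literally the sample cross-covariance, which is exactly why these identities appear.) Throughout, ``equivalent'' is read as equality up to a strictly monotone affine (location-scale) transformation, which leaves the induced test and its $p$-value unchanged.

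For the first identity, substitute $T_1(y_i;\wtF_Y)$ and $T_1(x_i;\wtF_X)$ and split the sum over the two classes, indexed by $\mathcal{C}_0=\{1,\dots,n_1\}$ and $\mathcal{C}_1=\{n_1+1,\dots,n\}$. Writing $C_i=R_i-(n+1)/2$, I obtain
\[
\hLP[1,1;Y,X]=\frac{\sqrt{12}}{n^2}\Big(-\sqrt{n_2/n_1}\textstyle\sum_{i\in\mathcal{C}_0}C_i+\sqrt{n_1/n_2}\sum_{i\in\mathcal{C}_1}C_i\Big).
\]
Since $\sum_{i\in\mathcal{C}_0}C_i=-\sum_{i\in\mathcal{C}_1}C_i$ by the centered-rank identity, the bracket reduces to $(\sqrt{n_2/n_1}+\sqrt{n_1/n_2})\sum_{i\in\mathcal{C}_1}C_i=\tfrac{n}{\sqrt{n_1n_2}}\sum_{i\in\mathcal{C}_1}C_i$, a positive multiple of $\sum_{i\in\mathcal{C}_1}C_i=W-n_2(n+1)/2$, where $W=\sum_{i\in\mathcal{C}_1}R_i$ is precisely the Wilcoxon rank-sum. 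Hence $\hLP[1,1;Y,X]$ is an affine function of $W$ with positive slope, which establishes the first equivalence.

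For the second identity, substitute $T_2(x_i;\wtF_X)=\tfrac{6\sqrt5}{n^2}C_i^2-\tfrac{\sqrt5}{2}$. The additive constant contributes $-\tfrac{\sqrt5}{2}\sum_{i=1}^n T_1(y_i;\wtF_Y)=0$ by the centering of the $Y$-basis, so only the quadratic term survives:
\[
\hLP[1,2;Y,X]=\frac{6\sqrt5}{n^3}\Big(-\sqrt{n_2/n_1}\textstyle\sum_{i\in\mathcal{C}_0}C_i^2+\sqrt{n_1/n_2}\sum_{i\in\mathcal{C}_1}C_i^2\Big).
\]
Because $\sum_{i\in\mathcal{C}_0}C_i^2+\sum_{i\in\mathcal{C}_1}C_i^2=\sum_{i=1}^n C_i^2$ depends only on $n$, the two group sums are affinely related, and the bracket equals a constant plus $\tfrac{n}{\sqrt{n_1n_2}}\sum_{i\in\mathcal{C}_1}\{R_i-(n+1)/2\}^2$, which is exactly Mood's dispersion statistic. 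This yields the second equivalence.

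The computations are routine; the only genuine subtlety is conceptual. One must fix conventions for the classical statistics, since ``Wilcoxon'' and ``Mood'' each appear in several affinely-shifted normalizations (rank-sum versus Mann--Whitney, centered versus uncentered), and the equivalence holds modulo this freedom. The substantive point to verify is therefore that the slopes linking $\hLP$ to $W$ and to Mood's $M$ are strictly positive, which follows from $\sqrt{n_2/n_1}+\sqrt{n_1/n_2}=n/\sqrt{n_1n_2}>0$; this guarantees a strictly monotone correspondence, so the induced tests coincide rather than merely agreeing up to sign. I expect this bookkeeping, rather than any estimation difficulty, to be the main thing requiring care.
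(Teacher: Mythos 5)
Your proposal is correct and follows essentially the same route as the paper's own proof: direct substitution of the Theorem \ref{thm:1} closed forms into $\hLP[1,\ell;Y,X]$, splitting the sum by class, and collapsing via the rank identities $\sum_i\{R_i-(n+1)/2\}=0$ and (for Mood) the fact that $\sum_i\{R_i-(n+1)/2\}^2$ depends only on $n$, yielding the same affine relations to the Wilcoxon rank-sum and Mood statistics. Your handling of the constant term in $T_2$ (vanishing by centering of the $Y$-basis) and your explicit check that the slopes are positive are slightly cleaner than the paper's ``routine calculations,'' but the underlying argument is identical.
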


The proofs of Theorems \ref{thm:1} and \ref{thm:wm} are deferred to the Appendix. Our LP-Hilbert space inner-product representation automatically produces ties-corrected linear rank statistics when $X$ is discrete, by appropriately standardizing by the factor $1-\sum_{x \in \mathscr{U}} \tp_X^3(x)$; cf. \cite{chanda1963}, and \citet[ch 4 p. 118]{hollander2013book}. This implies the sum of squares of LP-comeans 
\beq \label{eq:lpinfor} \sum_{\ell>0}\big|\LP[1,\ell;Y,X]\big|^2\eeq can provide a credible and unified two-sample test to detect univariate distributional differences. Our goal in this paper will be to introduce a multivariate generalization of this univariate idea that is applicable even when the dimension exceeds the sample size--a highly non-trivial modeling task. The crux of our approach lies in interpreting the data-adaptive LP-basis transformations as specially designed nonlinear discriminator functions, which paves the way for extending it to encompass multivariate problems.
\subsection{LP Graph Kernel} \label{sec:lpgk}
Here we are concerned with LP-graph kernel, the first crucial ingredient for constructing a graph associated with the given high-dimensional data.
\begin{definition} Define the $l$-th order LP Gram Matrix $W_{l}^{\LP}\in \mathbb{R}^{n\times n}$ as
\beq \label{eq:Gram}
W_l^{\LP}(i,j)=\Big(c+ \big\langle \Phi^{\LP}_l (x_i),  \Phi^{\LP}_l (x_j) \big\rangle\Big)^2,
\eeq
a polynomial kernel (degree=$2$) where $\Phi^{\LP}_l:\mathbb{R}^d \rightarrow \mathbb{R}^d$ denotes the feature map in the LP space for $l=1,2,\ldots$
\beq \label{eq:FM}
\Phi^{\LP}_l: (x_1,x_2,\cdots, x_d)^T \mapsto \big\{T_l(x_1,\wtF_1),T_l(x_2,\wtF_2),\cdots,T_l(x_d,\wtF_d)\big\}^T,~~~~
\eeq
as before, the function $T_l(\cdot;\wtF_j)$ denotes $l$-th order empirical LP-polynomial associated with $\wtF_j$.
\end{definition}

The positive symmetric kernel $W_l^{\LP}: \mathbb{R}^d \times \mathbb{R}^d \mapsto \mathbb{R}^+$ encodes the similarity between two $d$-dimensional data points in the LP-transformed domain. From $X$ and $W_l^{\LP}$, one can thus construct a weighted graph $\mathcal{G} = (V, W_l^{\LP})$ of size $n$, where the vertices $V$ are data points $\{x_1,\ldots,x_n\}$ with edge weights given by LP-polynomial kernel $W_l^{\LP}( x_i, x_j)$.
\vskip1em
\textsc{Analysis pipeline}:~$X\in \mathbb{R}^{n\times d}\longrightarrow \text{$\ell$-th order LP-Transform}\longrightarrow W_l^{\LP}\in \mathbb{R}^{n\times n}\longrightarrow \mathcal{G}(V, W_l^{\LP}).$
\vskip1em
The resulting graph captures the topology of the high-dimensional data cloud in the LP-transformed domain. In the next section, we introduce a novel reformulation of the $k$-sample problem as a supervised structure learning problem of the learned LP-graph.

\subsection{Equivalence to Graph Partitioning} 
\label{sec:ncut}
Having reformulated the high-dimensional comparison as a graph problem, we can focus on understanding its structure. In the following, we describe an example to highlight how LP graph-based representation can provide rich insights into our $k$-sample learning objectives. 

\begin{figure}[t]
\centering
\includegraphics[width=0.38\textwidth,keepaspectratio,trim=2.65cm .25cm 2.65cm .25cm]{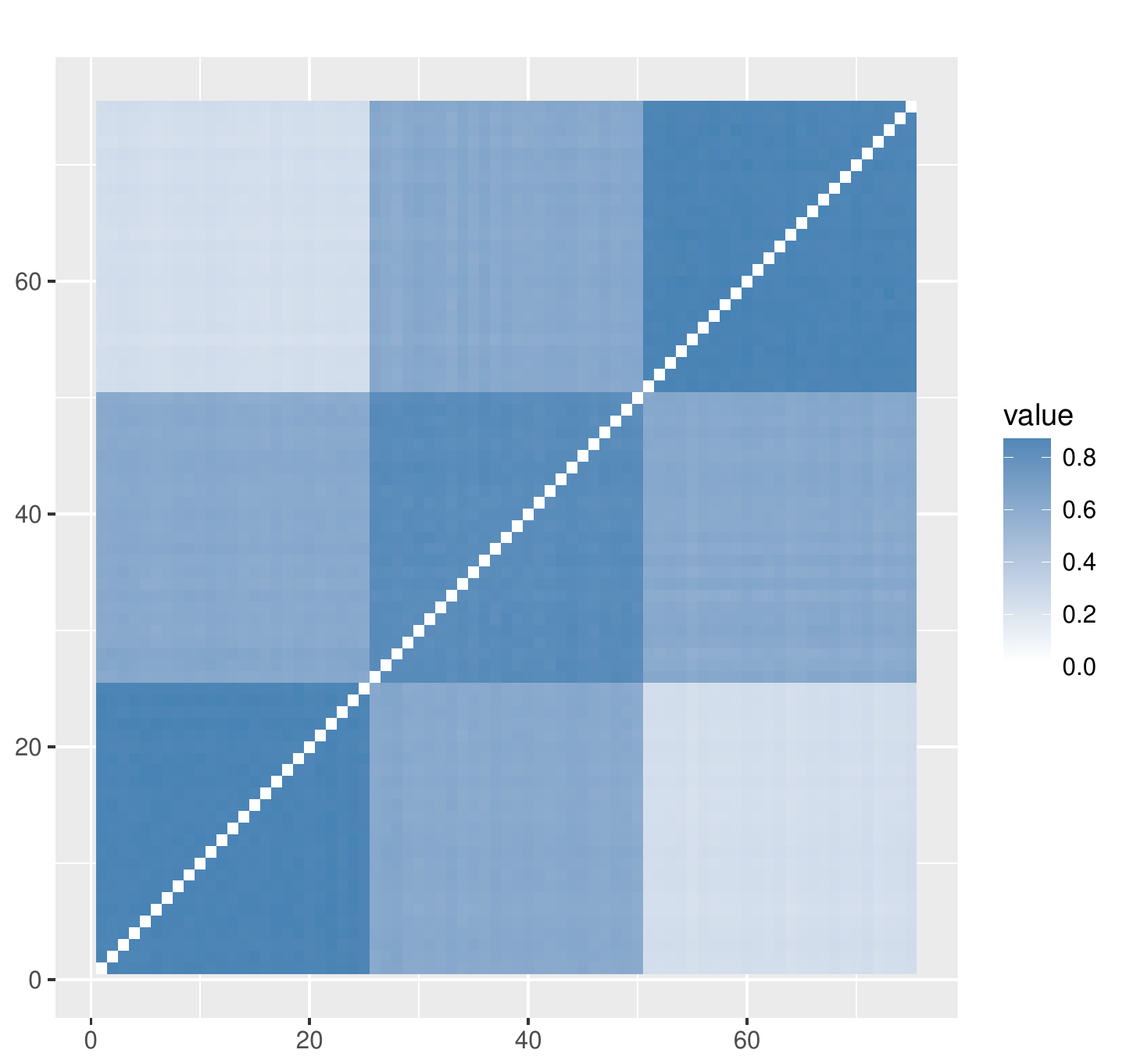}
\vskip.55em
\caption{Heatplot of $W_1^{\LP}$ Map for the $3$-sample location-alternative model discussed in Sec \ref{sec:ncut}. The samples are organized according to groups.}
\label{fig:hmap}
\vspace{-1.2em}
\end{figure}

Consider a three-sample testing problem based on the following location-alternative model: $G_i=\cN_d(\delta_i 1_d, I_d)$ with $\delta_1=0, \delta_2=1.5, \delta_3=3$, dimension $d=500$, and the sample sizes $n_i$'s are equal to $25$ for $i=1,2$ and $3$. Fig \ref{fig:hmap} shows the inhomogeneous connections of the learned graph $\mathcal{G} = (V, W_1^{\LP})$. It is evident from the connectivity matrix that LP-edge density $W_1^{\LP}(i,j)$ is significantly ``higher'' when both $x_i$ and $x_j$ come from the same distribution than when they arise from different distributions. This creates a natural clustering or division of the vertices into $k$-groups (here $k=3$) or communities. Naturally, under the null hypothesis (when all $G_i$'s are equal), we would expect to see one homogeneous (edge densities) graph of size $n$ with no community structure. In order to formalize this intuition of finding densely connected subgraphs, we need some concepts and terminology first.

The objective is to partition $V$ into $k$ non-overlapping groups of vertices $V_g, ~g=1,\ldots,k$, where in our case $k$ is \textit{known}. To better understand the principle behind graph partitioning, let us first consider the task of grouping the nodes into two clusters (i.e., the two-sample case). A natural criterion would be to partition the graph into two sets $V_1$ and $V_2$ such that the weight of edges connecting vertices in $V_1$  to vertices in $V_2$ is minimum. This can be formalized using the notion of graph cut:
\[\mbox{Cut}(V_1,V_2)=\sum_{i\in V_1, j\in V_2} W^{\LP}(i,j).\]
By minimizing this cut value, one can optimally bi-partition the graph. However, in practice, minimum cut criteria does not yield satisfactory partitions and often produces isolated vertices due to the small values achieved by partitioning such nodes.

One way of getting around this problem is to design a cost function that prevents this pathological case. This can be achieved by normalizing the cuts by the volume of $V_i$, where ${\rm Vol}(V_i)=\sum_{j \in V_i} \tdeg_j$ and $\tdeg_i$ is the degree of $i$-th node defined to be $\sum_{j=1}^n W_l^{\LP}(i,j)$. Partitioning based on normalized cut (Ncut) is a means of implementing this idea, since doing so minimizes the following cost function:
\beq \label{eq:ncut1} \mbox{NCut}(V_1,\ldots,V_k)= \sum_{g=1}^k \dfrac{\mbox{Cut}(V_g,V- V_g)}{{\rm Vol}(V_g)}, \eeq
which was proposed and investigated by \cite{shi2000}.
\begin{theorem} [Chung, 1997]
Define the indicator matrix $\Psi=(\psi_1,\ldots,\psi_k)$ where
\beq\label{eq:ncut2} \psi_{j,g} ~=~ \left\{ \begin{array}{rl}
 & \sqrt{\frac{\tdeg_j}{{\rm Vol}(V_g)}},~~ \mbox{if $j \in V_g$} \\
& ~~0,~~\textrm{otherwise.~~}
       \end{array} \right. \eeq
for $j=1,\ldots,n$ and $g=1,\ldots,k$. Then the k-way Ncut minimization problem \eqref{eq:ncut1} can be equivalently rewritten as
\beq \label{eq:ncut3}
\text{$\min_{V_1,\ldots,V_k} {\rm Tr}\big( \Psi^T \cL \Psi\big)$~ subject to ~$\Psi^T\Psi=I$, and $\Psi$ as in \eqref{eq:ncut2}},
\eeq
where $\cL$ is known as the (normalized) Laplacian matrix given by $D^{-1/2}W^{\LP}D^{-1/2}$ and $D$ is the diagonal matrix of
vertex degrees. \nocite{chung1997}
\end{theorem}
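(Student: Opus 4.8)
The plan is to reduce the combinatorial $\mathrm{NCut}$ objective to a matrix trace by a degree-absorbing change of variables, verifying along the way that the constraint $\Psi^{\T}\Psi=I$ is automatic for partition matrices of the form \eqref{eq:ncut2}. First I would introduce the ``unweighted'' normalized indicator matrix $H=(h_1,\ldots,h_k)$ with $h_{j,g}=1/\sqrt{\mathrm{Vol}(V_g)}$ when $j\in V_g$ and $h_{j,g}=0$ otherwise, and note that $\Psi=D^{1/2}H$ recovers exactly the matrix of \eqref{eq:ncut2}. Working with $H$ and the unnormalized Laplacian $L=D-W^{\LP}$ is what makes the computation transparent; the normalization is reinstated only at the final step.

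Two facts then suffice. For the constraint, the supports $\{V_g\}$ are disjoint and $\sum_{j\in V_g}\tdeg_j=\mathrm{Vol}(V_g)$, so $H^{\T}DH=I_k$ and hence $\Psi^{\T}\Psi=H^{\T}DH=I_k$. The computational heart is a single per-cluster identity obtained from the quadratic-form representation $f^{\T}Lf=\tfrac12\sum_{i,j}W^{\LP}(i,j)(f_i-f_j)^2$ evaluated at $f=h_g$: the difference $(h_{i,g}-h_{j,g})^2$ is nonzero only when exactly one endpoint lies in $V_g$, where it equals $1/\mathrm{Vol}(V_g)$, so collecting the boundary-crossing weights (each counted twice in the ordered double sum) gives
\[ h_g^{\T}Lh_g=\frac{\mathrm{Cut}(V_g,V-V_g)}{\mathrm{Vol}(V_g)}. \]

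Summing over $g$ yields $\mathrm{Tr}(H^{\T}LH)=\mathrm{NCut}(V_1,\ldots,V_k)$, and substituting $H=D^{-1/2}\Psi$ rewrites the left side as $\mathrm{Tr}\big(\Psi^{\T}D^{-1/2}LD^{-1/2}\Psi\big)=\mathrm{Tr}\big(\Psi^{\T}(I-\mathcal{L})\Psi\big)$, where $\mathcal{L}=D^{-1/2}W^{\LP}D^{-1/2}$ and $I-\mathcal{L}=L_{\mathrm{sym}}:=D^{-1/2}LD^{-1/2}$ is the symmetric normalized Laplacian. Since $\mathrm{Tr}(\Psi^{\T}\Psi)=\mathrm{Tr}(I_k)=k$ is constant on the feasible set, minimizing $\mathrm{NCut}$ over all $k$-partitions is identical to the constrained trace program of \eqref{eq:ncut3}. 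The one point requiring care is the convention: the object called the normalized Laplacian in the statement, $\mathcal{L}=D^{-1/2}W^{\LP}D^{-1/2}$, is really the normalized similarity matrix $I-L_{\mathrm{sym}}$, so up to the additive constant $k$ the program equivalently \emph{maximizes} $\mathrm{Tr}(\Psi^{\T}\mathcal{L}\Psi)$; I would state this reconciliation explicitly so the direction of optimization is unambiguous. I expect no deeper obstacle than this sign/convention bookkeeping together with the factor-of-two accounting in the boundary-edge sum -- once the change of variables $\Psi=D^{1/2}H$ is installed, the identity is essentially forced.
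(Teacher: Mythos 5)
Your argument is correct, but there is nothing in the paper to compare it against: the paper states this result as a borrowed, classical theorem (attributed to Chung, 1997, with Shi \& Malik, 2000, and von Luxburg, 2007, cited nearby) and supplies no proof of it --- the Appendix proves only Theorems 1, 2 and 5. What you have written is the standard spectral-clustering derivation: the change of variables $\Psi = D^{1/2}H$, the disjoint-support verification $\Psi^{\T}\Psi = H^{\T}DH = I_k$, and the per-cluster identity $h_g^{\T}(D-W^{\LP})h_g = \mathrm{Cut}(V_g, V\setminus V_g)/\mathrm{Vol}(V_g)$ obtained by evaluating the quadratic form $f^{\T}(D-W^{\LP})f = \tfrac12\sum_{i,j}W^{\LP}(i,j)(f_i-f_j)^2$ at $f=h_g$; your factor-of-two bookkeeping in the boundary sum is right. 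The most valuable part of your write-up is the reconciliation you flag at the end, and it deserves emphasis rather than apology: as literally written, the paper's $\cL = D^{-1/2}W^{\LP}D^{-1/2}$ is the normalized \emph{similarity} matrix $I - L_{\mathrm{sym}}$, not the normalized Laplacian, so on the feasible set (where $\mathrm{Tr}(\Psi^{\T}\Psi)=k$) one has $\mathrm{NCut}(V_1,\ldots,V_k) = k - \mathrm{Tr}(\Psi^{\T}\cL\Psi)$, and the displayed program should either \emph{maximize} $\mathrm{Tr}(\Psi^{\T}\cL\Psi)$ or replace $\cL$ by $I-\cL$. This is a real imprecision in the theorem as stated --- though consistent with the paper's practice, since its algorithm and Figure 2 take the \emph{top} eigenvectors of $\cL$, which is exactly the maximization form your proof arrives at. So: correct proof, standard route, and a sharper statement of the convention than the paper itself gives.
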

This discrete optimization problem \eqref{eq:ncut3} is unfortunately NP-hard to solve. Hence, in practice relaxations are used for finding the optimal Ncut by allowing the entries of the $\Psi$ matrix to take arbitrary real values: 
\beq \label{eq:ncutRelax} \text{$\min_{\Psi \in \mathbb{R}^{n\times k}} {\rm Tr}\big( \Psi^T \cL \Psi\big)$ ~subject to~ $\Psi^T\Psi=I$}.\eeq
The resulting trace-minimization problem can be easily solved by choosing $\Psi$ to be the first $k$ generalized eigenvectors of $\cL u=\la D u$ as columns--Rayleigh-Ritz theorem. Spectral clustering methods \cite[Sec. 5]{von2007} convert the real-valued solution into a discrete partition (indicator vector) by applying k-means algorithms to the rows of  eigenvector matrix $U$; see Remark R1 in the Appendix.
\begin{figure}[h]
\vskip.65em
\includegraphics[width=\linewidth,keepaspectratio,trim=.5cm .4cm .5cm 0cm]{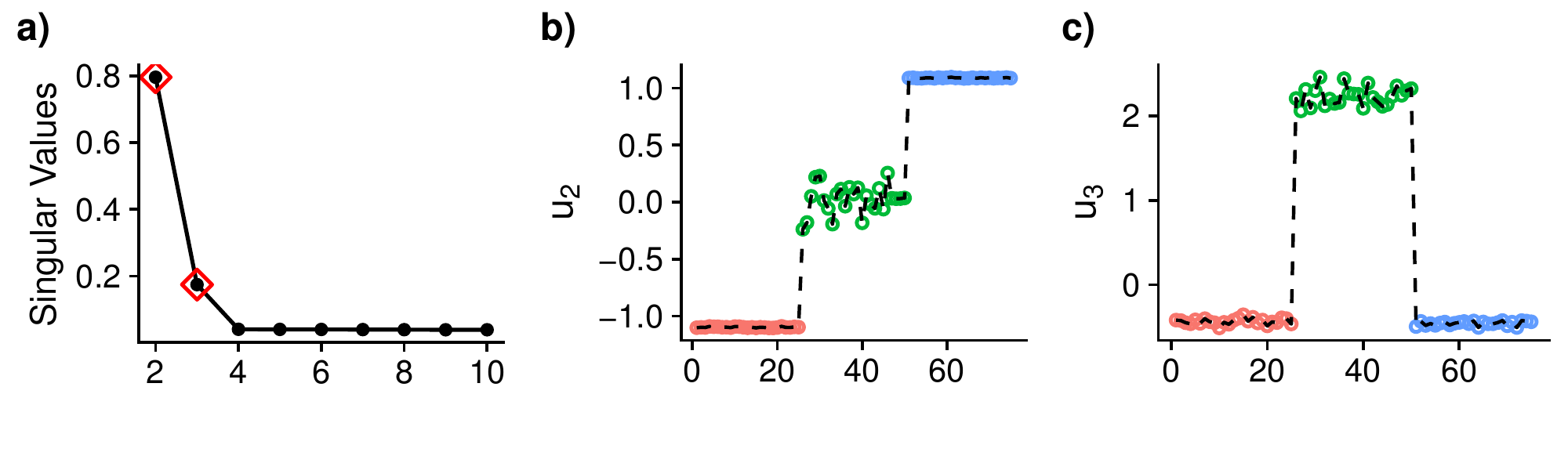}
\caption{Laplacian spectral analysis of 3-sample location problem using LP-graph kernel (cf. Sec \ref{sec:ncut}). The non-trivial singular values are shown in (a). The top ones $(\la_2,\la_3)$ are marked with red diamonds. (b) and (c) display the dominant singular vectors $(u_2,u_3)$; colors indicate the true group labels ($Y$).}
\label{fig:3sampleLapspec}
\end{figure}

Fig \ref{fig:3sampleLapspec} displays the spectrum of the Laplacian matrix for the 3-sample location-alternative model based on LP-graph kernel. As is evident from the figure, the top two dominant singular vectors accurately identify the original group's structure encoded in $Y$. In fact, \cite{von2008consistency} proved that as $\nti$ the Laplacian spectral clustering converges to the true cluster labels under very mild conditions which are usually satisfied in real world applications. This classical result has recently been extended \citep{karoui2010,couillet2016kernel} to the big-data regime where the dimension $d$ and sample size $n$ grow simultaneously, by applying the spiked random matrix theory and concentration of measure results.
\subsection{Test Statistics and Asymptotic Properties} \label{sec:tstat}
Following the logic of the previous section, we identify the hidden $k$ communities simply by clustering (using k-means algorithm) each row of $U$ as a point in $\mathbb{R}^{k}$. We store the cluster assignments in the vector $Z$, where $Z_i\in \{1,\ldots,k\}$. It is important to note that up until now we have not used the true labels or the group information $Y_i$ for each data point, where $Y_i \in \{1,\ldots,k\}$ for $i=1,\ldots,n$.

At each node of the graph, we now have the bivariate $(Y_i,Z_i)$, which can be viewed as a map $V \mapsto \{1,\ldots, k\}^2$. This data-on-graph viewpoint will allow us to convert the original high-dimensional $k$-sample testing problem into a more convenient form. At this point, an astute reader might have come up with an algorithm by recognizing that the validity of the $k$-sample null-hypothesis can be judged based on how closely $Y$ the group index variable is correlated with the intrinsic community structure $Z$ across the vertices. This turns out to be an absolutely legitimate algorithm. In fact, when the null hypothesis is true, one would expect that the $Z_i$'s can take values between $1$ and $k$ almost randomly, i.e. $\Pr(Z_i=g)=1/k$ for $g=1,\ldots,k$ and $i \in V$. Thus the hypothesis of equality of $k$ high-dimensional distribution can now be reformulated as an independence learning problem over graph $H_0$: $\texttt{Independence}(Y, Z)$. Under the alternative, we expect to see a higher degree of dependence between $Y$ and $Z$. 

The fundamental function for dependence learning is the `normed joint density,' pioneered by \cite{Hoeff40}, defined as the joint density divided by the product of the marginal densities:
\[\texttt{dep}(y_i,z_i;Y,Z)=\dfrac{p(y_i,z_i;Y,Z)}{p(y_i;Y) p(z_i;Z)},\]
which is a `flat' function over the grid $\{1,\ldots,k\}^2$ under independence.
\begin{definition}
For given discrete $(Y,Z)$, the bivariate copula density kernel $\cd: [0,1]^2 \rightarrow \mathbb{R}_+ \cup \{0\}$
is defined almost everywhere through
\beq \label{eq:copdef}
\cd(u,v;Y,Z)\,=\,\texttt{dep}\big\{Q(u;Y),Q(v;Z);Y,Z\big\}\,=\,\dfrac{p\big\{ Q(u;Y), Q(v;Z);Y,Z\big\}}{p\big\{ Q(u;Y) \big\} p\big\{  Q(v;Z) \big\}},~~~0<u,v<1
\eeq
\end{definition}
where $Q(\cdot)$ denotes the quantile function. It is not difficult to show that this quantile-domain copula density is a positive piecewise-constant kernel satisfying
\[\iint_{[0,1]^2} \cd(u,v;Y,Z) \dd u \dd v\,~ = \sum_{(i,j)\in \{1,\ldots,k\}^2} \iint_{I_{ij}} \cd(u,v;Y,Z) \dd u \dd v ~= ~1,\]
where
\[I_{ij}(u,v)= \left\{ \begin{array}{ll}
         \,1, ~~& {\rm if}~ (u,v) \in \left(F_Y(i), F_Y(i+1)\right] \times \left( F_Z(j), F_Z(j+1)\right]\\
         \,0, ~~& \mbox{elsewhere}.\end{array} \right.\]
\begin{theorem} \label{th4}
The discrete checkerboard copula $\cd(u,v;Y,Z)$ satisfies the following important identity in terms of LP comeans between $Y$and $Z$:
\[\iint_{[0,1]^2} (\cd-1)^2\dd u \dd v ~= ~\sum_{j=1}^{k-1}\sum_{\ell=1}^{k-1}  \big|\LP[j,\ell;Y,Z]\big|^2.\]
\end{theorem}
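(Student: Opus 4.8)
The plan is to recognize both sides as the squared $\cL^2$-norm of the centered dependence function $\texttt{dep}-1$, and then connect them through Parseval's identity in the LP-basis. The argument has three movements: reduce the left-hand integral to a sum over the support grid, build a complete orthonormal basis from the LP-polynomials, and identify the resulting Fourier coefficients with the LP-comeans.

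First I would collapse the left-hand integral onto the support grid. Because $\cd$ is piecewise constant on each cell $I_{ij}$, with the cell carrying mass $p(i;Y)\,p(j;Z)$ and $\cd$ equal to $\texttt{dep}(i,j;Y,Z)$ there, the integral becomes
\[
\iint_{[0,1]^2}(\cd-1)^2\,\dd u\,\dd v \;=\; \sum_{i=1}^{k}\sum_{j=1}^{k} p(i;Y)\,p(j;Z)\big(\texttt{dep}(i,j;Y,Z)-1\big)^2,
\]
which is precisely $\|\texttt{dep}-1\|^2$ in the Hilbert space $\cL^2$ of functions on $\{1,\ldots,k\}^2$ equipped with the product of the two marginals. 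Next comes the structural fact driving the theorem: for $Y$ supported on $k$ atoms, Gram--Schmidt on $\{\zeta,\zeta^2,\ldots\}$ produces exactly the $k-1$ non-constant LP-polynomials $T_1(\cdot;F_Y),\ldots,T_{k-1}(\cdot;F_Y)$, and together with $T_0\equiv 1$ these span the $k$-dimensional space $\cL^2(F_Y)$; the same holds for $Z$. Hence the tensor products $\{T_j(\cdot;F_Y)\,T_\ell(\cdot;F_Z)\}_{j,\ell=0}^{k-1}$ form an orthonormal basis of the product space, and Parseval gives
\[
\|\texttt{dep}-1\|^2 \;=\; \sum_{j=0}^{k-1}\sum_{\ell=0}^{k-1}\big|\langle \texttt{dep}-1,\, T_j T_\ell\rangle\big|^2 .
\]

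The final movement is to evaluate the coefficients. Splitting $\texttt{dep}-1$ and noting that integrating $\texttt{dep}$ against the product measure $p(y)p(z)$ reproduces the joint law, for $j,\ell\ge 1$ the coefficient equals $\Ex[T_j(Y;F_Y)\,T_\ell(Z;F_Z)]=\LP[j,\ell;Y,Z]$, while the contribution of the ``$-1$'' term factors into a product of marginal means and vanishes by orthogonality to the constant, $\Ex[T_j(Y;F_Y)]=\langle T_j,T_0\rangle=0$. For any boundary index ($j=0$ or $\ell=0$) the same orthogonality to $T_0$ forces the coefficient to zero; in particular the $(0,0)$ coefficient is $\Ex[\texttt{dep}]-1=0$ under the product measure. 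Only the block $j,\ell\in\{1,\ldots,k-1\}$ survives, which yields the stated identity.

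The one delicate point is the completeness/dimension-counting step: I must verify that the LP-system together with the constant spans $\cL^2(F_Y)$ exactly, so that Parseval holds with equality and the double sum truncates precisely at $k-1$ rather than over- or under-counting the spectrum. Everything else is routine bookkeeping once the product-space orthonormal basis is in hand.
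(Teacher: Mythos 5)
Your proposal is correct and follows essentially the same route as the paper's own proof: identify the LP-Fourier coefficients of the copula density with the LP-comeans $\LP[j,\ell;Y,Z]$ and invoke Parseval's identity. You merely make explicit what the paper leaves implicit---the reduction of the $[0,1]^2$ integral to the discrete grid under the product of marginals, the dimension count showing the $k-1$ nonconstant LP-polynomials plus the constant exhaust $\cL^2(F_Y)$ (so Parseval holds with equality and the sum truncates exactly at $k-1$), and the vanishing of the boundary coefficients involving $T_0$.
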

\begin{proof} The key is to recognize that
\[\iint\limits_{u,v} \cd(u,v;Y,Z) T_j\{Q(u;Y);F_Y\} T_\ell\{Q(v;Z);F_Z\} \dd u \dd v=\mathop{\sum\sum}_{y,z}  T_j(y;F_Y) T_\ell(z;F_Z) p(y,z;Y,Z),\]
which, using definition 1, can be further simplified as
\[\Ex[T_j(Y;F_Y) T_\ell(Z;F_Z)]~=~\LP[j,\ell;Y,Z].\] 
Apply Parseval's identity on the LP-Fourier copula density expansion to finish the proof. 
\end{proof}
Theorem \ref{th4} implies that a test of independence is equivalent to determining whether all the $(k-1)^2$ parameters $\LP[j,\ell;Y,Z]$ are zero. Motivated by this, we propose our \underline{G}raph-based \underline{LP}-nonparametric (abbreviated to GLP) test statistic: 
\beq \label{eq:glp} {\rm GLP}~ \text{statistic}~=~ \sum_{j=1}^{k-1}\sum_{\ell=1}^{k-1}  \big|\hLP[j,\ell;Y,Z]\big|^2.
\eeq
The test statistics \eqref{eq:glp} bear a surprising resemblance to \eqref{eq:lpinfor}, thus fulfilling a goal that we had set out to achieve. 
\begin{theorem} \label{them:an}
Under the independence, the empirical LP-comeans $\hLP[j,\ell;Y,Z]$ have the following limiting null distribution as $\nti$
\[\sqrt{n} \hLP[j,\ell;Y,Z] \,\overset{i.i.d}{\sim} \,\cN(0,1).\]
\end{theorem}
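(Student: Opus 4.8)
The plan is to treat $\sqrt{n}\,\hLP[j,\ell;Y,Z]$ as a studentized sum over the i.i.d. pairs $(Y_i,Z_i)$ and to establish the limit by a central-limit argument. Writing out the estimator,
\[
\sqrt{n}\,\hLP[j,\ell;Y,Z] \;=\; \frac{1}{\sqrt{n}}\sum_{i=1}^n T_j(Y_i;\wtF_Y)\,T_\ell(Z_i;\wtF_Z),
\]
I would first record the two structural facts that drive everything: by construction the LP-bases are orthonormal in their respective $\cL^2$ spaces and orthogonal to the constant, so each $T_j(\cdot;F_Y)$ has mean zero and unit variance with $\Ex[T_j(Y)T_{j'}(Y)]=\delta_{jj'}$; and under the null $Y$ and $Z$ are independent, so the summand $W_i:=T_j(Y_i;F_Y)\,T_\ell(Z_i;F_Z)$ satisfies $\Ex[W_i]=\Ex[T_j(Y)]\,\Ex[T_\ell(Z)]=0$ and $\Ex[W_i^2]=\Ex[T_j(Y)^2]\,\Ex[T_\ell(Z)^2]=1$. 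If the empirical bases $T_j(\cdot;\wtF_Y)$ could be replaced by their population versions $T_j(\cdot;F_Y)$, the result would be immediate.

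The substantive step, and the one I expect to be the main obstacle, is therefore showing that this replacement is harmless, i.e.
\[
\frac{1}{\sqrt{n}}\sum_{i=1}^n T_j(Y_i;\wtF_Y)\,T_\ell(Z_i;\wtF_Z)
\;=\;\frac{1}{\sqrt{n}}\sum_{i=1}^n T_j(Y_i;F_Y)\,T_\ell(Z_i;F_Z)\;+\;o_P(1).
\]
The difficulty is that the empirical transforms couple all $n$ observations, so $\hLP$ is a $V$-statistic rather than a clean i.i.d. average. I would exploit that $Y$ and $Z$ are discrete on $\{1,\ldots,k\}$, so each empirical basis is a fixed smooth function of the $\sqrt{n}$-consistent cell frequencies; hence $T_j(Y_i;\wtF_Y)-T_j(Y_i;F_Y)=g(Y_i)$ with $\sup_i|g(Y_i)|=O_P(n^{-1/2})$. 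Plugging this in and expanding, the leading cross term is $n^{-1/2}\sum_i g(Y_i)\,T_\ell(Z_i;F_Z)$; conditioning on the $Y$-sample, which fixes $g$, this has conditional mean zero because $\Ex[T_\ell(Z)]=0$ under independence, and conditional variance $n^{-1}\sum_i g(Y_i)^2=O_P(n^{-1})$, so it is $o_P(1)$. The symmetric $Z$-term and the second-order remainder are handled identically, completing the linearization.

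With the replacement in hand, the main term $n^{-1/2}\sum_i W_i$ is a normalized sum of bounded i.i.d. variables with mean $0$ and variance $1$, so the Lindeberg--L\'evy CLT gives the stated $\cN(0,1)$ limit. For the joint ``i.i.d.'' claim across the $(k-1)^2$ indices, I would apply the Cram\'er--Wold device to the random vector with coordinates $W_i^{(j,\ell)}=T_j(Y_i;F_Y)\,T_\ell(Z_i;F_Z)$ and compute its limiting covariance; by the factorization across the independent coordinates $Y$ and $Z$ together with orthonormality,
\[
\lim_{n\to\infty}\mathrm{cov}\!\big(\sqrt{n}\,\hLP[j,\ell],\,\sqrt{n}\,\hLP[j',\ell']\big)
=\Ex[T_j(Y)T_{j'}(Y)]\,\Ex[T_\ell(Z)T_{\ell'}(Z)]=\delta_{jj'}\,\delta_{\ell\ell'},
\]
so the limiting covariance matrix is the identity and the normalized LP-comeans are jointly asymptotically independent standard normals, as claimed.
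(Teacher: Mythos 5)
Your proof is correct, but it takes a genuinely different route from the paper's. The paper's own argument (Appendix, Proof of Theorem 5) is a two-line sketch: it writes the sample LP-comean as a weighted sum over the $k\times k$ grid of categories and then asserts that $|\sqrt{n}\,\hLP[j,\ell;Y,Z]|^2$ is the Rao score statistic for testing $\LP[j,\ell;Y,Z]=0$, so the standard-normal limit and the asymptotic independence across the $(k-1)^2$ components are inherited from classical efficient-score theory in an orthogonal parametrization. You instead give a direct, self-contained argument: linearize away the data-dependence of the empirical bases, then apply the Lindeberg--L\'evy CLT and the Cram\'er--Wold device with an identity limiting covariance. What your route buys is precisely the step the paper glosses over: because $T_j(\cdot;\wtF_Y)$ and $T_\ell(\cdot;\wtF_Z)$ are estimated from the same sample, $\hLP$ is not an i.i.d.\ average, and your $O_P(n^{-1/2})$ sup-norm bound on $T_j(\cdot;\wtF_Y)-T_j(\cdot;F_Y)$ together with the conditional-variance treatment of the cross terms is exactly what makes the reduction to population bases rigorous; the conditioning trick is legitimate because under $H_0$ the entire $Z$-sample is independent of the entire $Y$-sample. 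What the paper's route buys is brevity and an immediate tie to score-test theory, from which the $\chi^2_{(k-1)^2}$ limit of the summed GLP statistic also follows at once. One condition you should state explicitly: the delta-method/smoothness claim for the empirical bases requires that all $k$ categories of $Y$ and of $Z$ have strictly positive limiting probability, so that the population bases up to order $k-1$ exist and the Gram--Schmidt map from cell frequencies to basis coefficients is nondegenerate; this regularity assumption is implicit (and equally unstated) in the paper's proof.
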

Appendix \ref{appendix:proof5} contains the proof. Theorem \ref{them:an} readily implies that the asymptotic null distribution of our test statistic \eqref{eq:glp} is chi-square distribution with $(k-1)^2$ degrees of freedom. However, it is instructive to investigate the accuracy of this chi-square null distribution in approximating p-values for finite samples. The boxplots in Figure \ref{fig:nullasymp_comp} display the differences between the asymptotic p-values and permutation p-values for $G_1=G_2=\cN_d(0,I_d)$. 
The permutation p-values were computed by randomly shuffling the class-labels $Y$ $1000$ times. Each simulation setting was repeated $250$ times for all the combinations of $n_1, n_2$ and $d$. From the boxplots, we can infer that the theoretically predicted asymptotic p-values provide an excellent approximation to the permutation p-values for sample size $n \ge 100$. Furthermore, the limiting chi-square based p-values approximation continues to hold even for increasing dimension $d$. 
\begin{figure}[t]
\centering
\includegraphics[width=.92\linewidth,keepaspectratio
,trim=1cm 1.4cm 1cm 1cm]{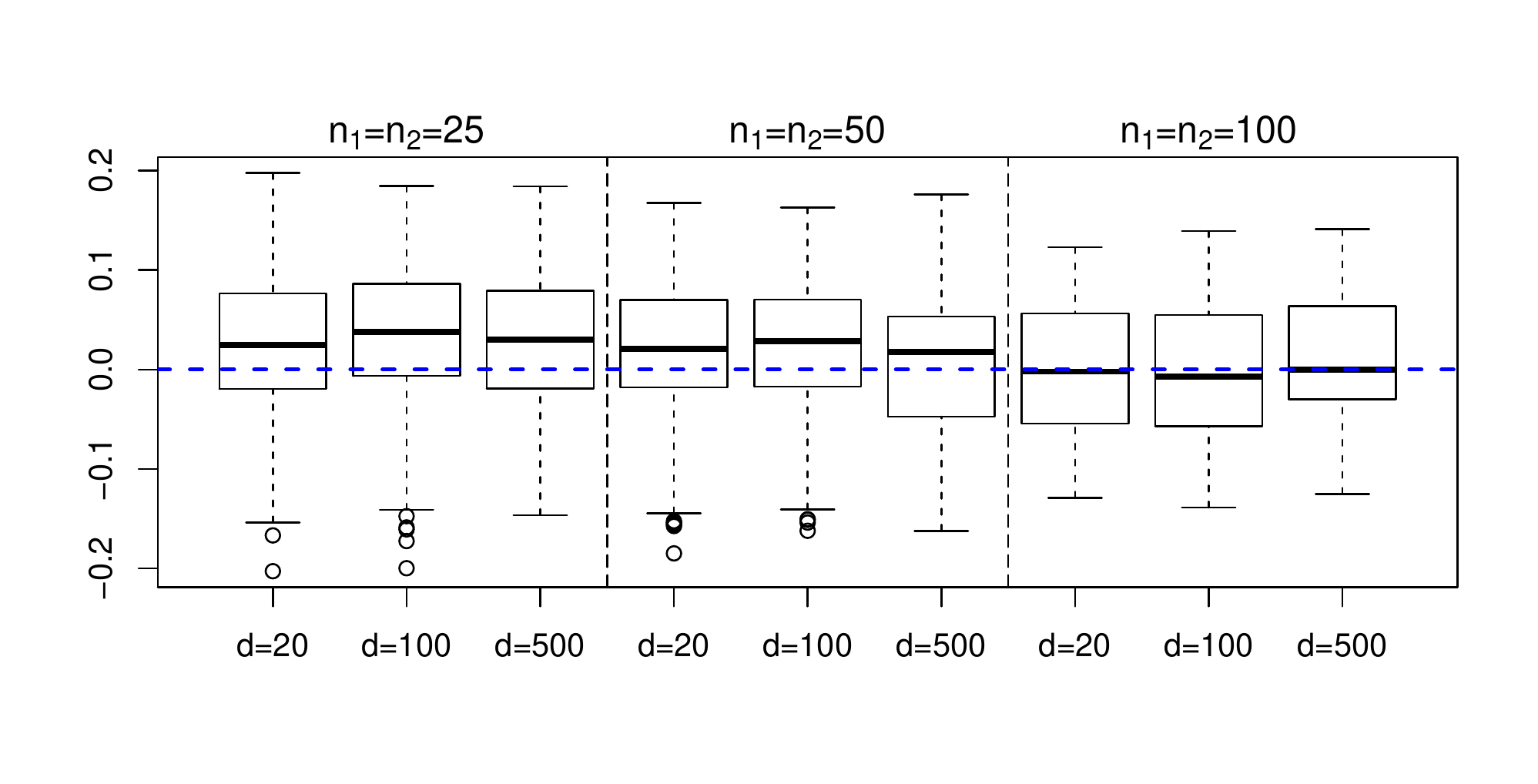}~~
\caption{Boxplots of the difference between asymptotic p-values and p-values from 1000 permutations, compared across different sample sizes $n_1$, $n_2$ and dimension $d$. Each setting was repeated $250$ times.}
\label{fig:nullasymp_comp}
\vspace{-1.65em}
\end{figure}
\subsection{Algorithm}
\label{sec:algo}
This section describes the steps for our $k$-sample testing procedure that combines the strength of modern nonparametric statistics with spectral graph theory to produce an interpretable and adaptable algorithm. 
\begin{center}

{\bf \texttt{GLP}: Graph-based Nonparametric k-sample Learning}
\end{center}
\vspace{-.8em}
\medskip\hrule height .7pt
\vskip.65em
~~~\,\texttt{Step 1.} Input: We observe the group identifier vector $Y$ and the data matrix $X \in \mathbb{R}^{n\times d}$. The features could be discrete, continuous, or even mixed. 
\vskip.4em
\texttt{Step 2.} Construct LP-graph kernel $W_{\ell}^{\LP}\in \mathbb{R}^{n\times n}$ using \eqref{eq:Gram}. The choice of $\ell$ depends on the type of testing problem: $\ell=1$ gives $k$-sample test for mean, $\ell=2$ for scale alternative and so on.  
\vskip.4em
\texttt{Step 3.} Compute normalized Laplacian matrix $\cL$ for the LP-learned graph $\mathcal{G} = (V, W_{\ell}^{\LP})$ by $D^{-1/2}W_{\ell}^{\LP}D^{-1/2}$, where $D$ is the the diagonal degree matrix with elements $W_{\ell}^{\LP}1_n$.
\vskip.4em
\texttt{Step 4.} Perform spectral decomposition of $\cL$ and store leading nontrivial $k-1$ eigenvectors in the matrix $U \in \mathbb{R}^{n\times k-1}$. 
\vskip.4em
\texttt{Step 5.} Apply k-means clustering by treating each row of $U$ as a point in $\mathbb{R}^{k-1}$ for Ncut community detection \eqref{eq:ncutRelax}. Let $Z$, a vector of length $n$, denotes the cluster assignments obtained from the k-means.
\vskip.4em
\texttt{Step 6.} For each node of the constructed graph we now have bivariate random sample $(Y_i,Z_i)$ for $i\in V$. Perform correlation learning over graph by computing the GLP statistic \eqref{eq:glp}. Compute the p-value using $\chi^2_{(k-1)^2}$ null distribution, as described in Sec \ref{sec:tstat}.
\vskip.4em
\texttt{Step 7.} For complex multi-directional testing problems, fuse the LP-graph kernels to create a super-kernel by $W^{\LP}=\sum_\ell W_{\ell}^{\LP}$.
\vskip.25em
~~~~(\texttt{7a})~ Merging: As an example, consider the general location-scale alternatives--a targeted testing. Here we compute $W^{\LP}$ by taking sum of $W_{1}^{\LP}+W_{2}^{\LP}$ and repeating steps 3-6 for the testing. 
\vskip.4em
~~~\,(\texttt{7b})~ Filtering: Investigators with no precise knowledge of the possible alternatives can combine informative $W_{\ell}^{\LP}$ based on the p-value calculation of step 6 after adjustment for multiple comparisons to construct a tailored graph kernel, as in tables \ref{table1.1:lpchart}, \ref{table3:leuk} and \ref{table:brain}.
\medskip\hrule height .65pt
\vskip1em
\subsection{Illustration Using p53 Gene-set Data} \label{sec:p53}
Here we demonstrate the functionality of the proposed GLP algorithm using p53 geneset data (\texttt{www.broad.mit.edu/gsea}): it contains transcriptional profiles of $10,100$ genes in $50$ cancer cell lines over two classes: $n_1=17$ classified as normal and $n_2=33$ as carrying mutations. The genes were cataloged by \cite{subramanian2005gene} into $522$ genesets based on known biological pathway information.
Naturally, the goal is to identify the multivariate oncogenesets that are differentially expressed in case and control samples. 

For illustration purposes, we focus on two genesets: (a) ``SA\_G1\_AND\_S\_PHASES" with number  of genes $d=14$, and (b) ``anthraxPathway" with number of genes $d=2$. Without any prior knowledge about the type of alternatives, we apply steps 2-6 of  GLP algorithm to create the component-wise decomposition, as depicted in Table \ref{table1.1:lpchart}. Our method decomposes the overall statistics, quantifying the departure from the null hypothesis of equality of high-dimensional distributions, into different orthogonal components. By this means, it provides insights into the possible alternative direction(s). For example, the pathway ``SA\_G1\_AND\_S\_PHASES" shows a location shift, whereas the output for ``anthraxPathway" indicates the difference in the tails. This refined insight could be very useful to a biologist who seeks to understand more deeply \textit{why} any particular geneset is important. Once we identify the informative directions, we compute the super-kernel (step 7  of our algorithm) by fusing $W^{\LP}=\sum_{\in \,\textrm{sig.}\, \ell} W_\ell^{\LP}$.
In the case of ``anthraxPathway," this simply implies $W^{\LP} \leftarrow \,W_4^{\LP}$ and so on for other genesets. This combined $W^{\LP}$ effectively packs all the principal directions (filtering out the uninteresting ones) into a single LP-graph kernel. At the final stage, we execute steps 3-5 using this combined kernel to generate the overall $k$-sample GLP statistic along with its p-value, shown in the bottom row of Table \ref{table1.1:lpchart}. 
\begin{table}[!hht]
\caption{GLP multivariate $k$-sample test for p53 data. The table shows the output of our GLP algorithm, given in Sec \ref{sec:algo}. The overall statistic provides the global $k$-sample confirmatory test, while the individual `components' give exploratory insights into how the multivariate distributions are different. The significant components based on p-values adjusted for multiple comparisons are marked with an asterisk `*'}
\centering
\begin{tabularx}{.8\linewidth}{f ffff}
\toprule
\multirow{2}{*}{Component} & \multicolumn{2}{c}{(a)} & \multicolumn{2}{c}{(b)}\\
\cmidrule(lr){2-3}\cmidrule(lr){4-5}
 & GLP & p-value &  GLP & p-value\\
\midrule
1 & 0.145 & 0.007* & $3.84\times 10^{-4}$ & 0.890\\
2 & 0.045 & 0.136  & 0.003 & 0.720\\
3 & 0.002 & 0.754 & 0.045 & 0.136\\
4 & 0.034 & 0.196  & 0.131 & 0.011*\\
\midrule
overall& 0.145 & 0.007 &0.125 &0.012\\
\bottomrule
\end{tabularx}
\label{table1.1:lpchart}
\vskip.25em
\end{table}

Interestingly, \cite{subramanian2005gene} found the geneset ``SA\_G1\_AND\_S\_PHASES" to be related to P53 function but missed ``anthraxPathway,'' the reason being they have used Kolmogorov-Smirnov test which is known to exhibit poor sensitivity for tail deviations; also see Supplementary Material S10.
\begin{table}[h]
\setlength{\tabcolsep}{12pt}
\tbl{Comparing the five tests for two genesets: (a) SA\_G1\_AND\_S\_PHASES;  (b) anthraxPathway. The dimensions are $14$ and $2$, respectively. The reported numbers are all p-values}{
\begin{tabularx}{.85\linewidth}{Y Y Y Y Y Y}
\toprule
Geneset & GLP & FR & Rosenbaum & HP & GEC\\
\midrule
(a) & 0.007 & 0.025 & 0.061 & 0.041 & 0.141\\
(b) & 0.010 & 0.327 & 0.591 & 0.078 & 0.904\\
\bottomrule
\end{tabularx}}
\vskip.3em
\label{table1.3:p53comp}
\begin{tabnote} Methods abbreviation: FR is Friedman \& Rafsky's test; GEC is generalized edge count method by \cite{chen2016}; HP is Biswas' shortest Hamiltonian path method. And GLP is the proposed nonparametric graph-based test.
\end{tabnote}
\end{table}

Finally, it is also instructive to examine our findings in light of other existing methods. This is done in Table \ref{table1.3:p53comp}. In the first case (a), where we have location difference, all methods have no problem finding the significance. However, interestingly enough, most of the competing methods start to fail for (b) where we have higher-order tail difference. This will be further clarified in the next section, where we present numerical studies to better understand the strengths and weaknesses of different methods.

\section{Numerical Results}
\subsection{Power Comparison}
We conduct extensive simulation studies to compare the performance of our method with that of four other methods: Friedman-Rafsky's edge-count test, generalized edge-count test on 5-minimum spanning tree, Rosenbaum's cross matching test, and Shortest Hamiltonian Path method. All the simulations in this section are performed as follows: (i) We focus on the two-sample case, as all the other methods are only applicable in this setup ($k=5$ case is reported in section \ref{sec:brain}) with sample sizes $n_1=n_2=100$; (ii) each case is simulated $100$ times to approximate the power of the respective test. The performance of the algorithms was examined under numerous realistic conditions, as we will shortly note.
\begin{figure}[t]
\centering
\includegraphics[height=.38\textheight, width=0.95\textwidth,trim=1cm .8cm 1cm 0cm]{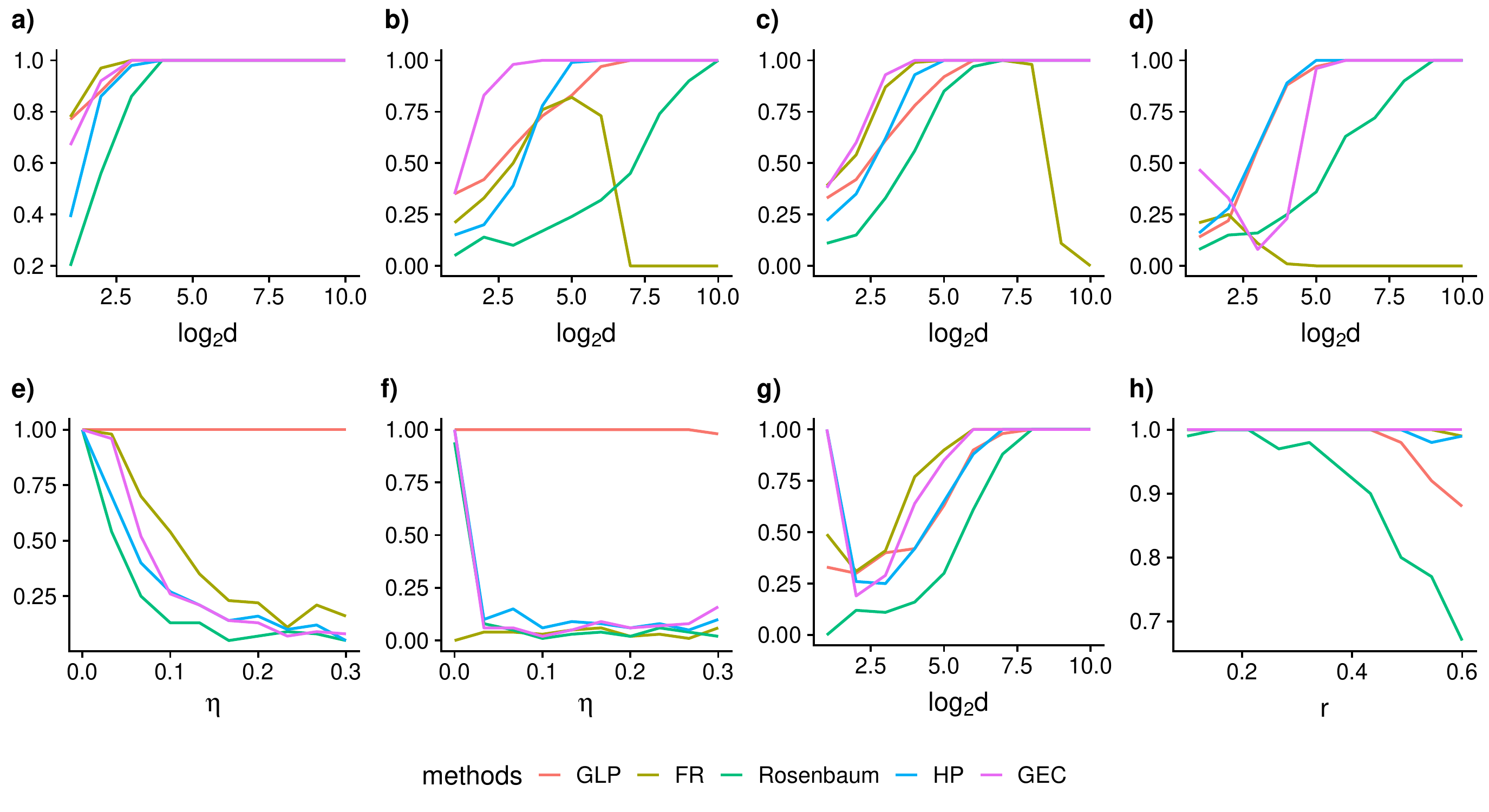} \vskip1em
\vskip1.5em
\caption{(color online) Power comparisons: (a) Location difference $\cN(0,I)$ vs $ \cN(0.5, I)$; (b) Scale difference $\cN(0,I)$ vs $ \cN(0, 1.5I)$; (c) $N(0,I)$ vs $N(0.3 1_d, 1.3I)$; (d) Tail detection $N(0,I_d)$ vs $\mathcal{T}_3(0,I_d)$; (e) $\eta$-Contaminated location difference, with $\eta$ as the percentage of outliers; (f) Tail detection in the presence of outliers; (g) Discrete case ${\rm Poisson}(5)$ vs ${\rm Poisson}(5.5)$; and (h) Mixed alternatives. For (e), (f) we have used $d=500$, for (h), $d=100$, and for constructing the LP-graph kernel \eqref{eq:Gram} we have used $c=0.5$.}
\label{fig:sim1}
\end{figure}

In the first example, we investigate the location alternative case with two groups generated by $G_1=\cN(0,I_d)$ and $G_2= \cN(0.5\mathbf{1},I_d)$ with dimension $d$ ranging from $2$ to $2^{10}=1024$. The result is shown in Fig. \ref{fig:sim1} (a). For small and medium dimensions, our proposed test performs best;  for moderately large dimensions all the methods are equally powerful. In example two, we examine the scale case by choosing $G_1= \cN(0,I_d)$ and $G_2=\cN(0,1.5I_d)$. Here generalized edge-count test reaches the best performance, followed by our test; Friedman and Rafsky’s test completely breaks down. The third example is the general location-scale case $G_1=\cN(0,I_d)$ and $G_2=\cN(0.3 1_d,1.3I_d)$. The estimated power function is shown in Fig \ref{fig:sim1} (c). Our method still displays high performance, followed by the Hamiltonian Path. Example four explores the tail alternative case: $G_1=\cN(0,I_d)$ and $G_2$ is $\mathcal{T}_3(0,I_d)$ Student's t-distribution with degrees of freedom $3$. Not surprisingly, edge-count and generalized edge-count tests suffer the most, as they are not designed for these sorts of higher-order complex alternatives. Both our approach and the Hamiltonian test exhibit excellent performance, which also explains our Table \ref{table1.3:p53comp} finding for  ``anthraxPathway." Discrete data is also an important aspect of two-sample problems. We checked the case of location difference by generating the samples with $G_1={\rm Poisson}(5)$ vs $G_2={\rm Poisson}(5.5)$ in the fifth example, which is depicted in Fig \ref{fig:sim1} (g). Here all methods perform equally well, especially for large-dimensional cases. Example six delves into the important robustness issue. For that purpose, we introduce perturbation of the form $\epsilon_i \sim (1-\eta)N(0,1) +\eta N(\delta_i, 3) $ where $\delta_i=\pm 20$ with probability $1/2$ and $\eta$ varies from $0$ to $0.3$. Empirical power curves are shown in Figs \ref{fig:sim1} (e) and (f). Our proposed test shows extraordinary stability in both cases. The rest of the algorithms reveal their extreme sensitivity towards outliers, so much so that the presence of even $.05\%$ in location-alternative case can reduce the efficiency of the methods by nearly $80\%$, which is a striking number. In the final example we explore the interesting case of mixed alternatives. Here the idea is to understand the performance of the methods when different kinds of alternative hypotheses are mixed up. To investigate that we generate the first group from $G_1=\cN(\mathbf{0},I_d)$, and for the alternative group we generate a portion (also 50 samples) from $\cN(0.3\mathbf{1},I_{d_1})$ and another portion from $\cN(0,1.3I_{d_2})$, where $d=d_1+d_2$, and $r=d_2/d$.  Fig \ref{fig:sim1} (h) shows that generalized edge-count test, Hamiltonian method, and our test perform best. Additional simulation results are given in the Supplementary Materials S2-S6.

\begin{figure}[t]
\centering
~~~~~\includegraphics[width=0.6\textwidth,keepaspectratio,trim=2.5cm 0cm 2.5cm 0cm]{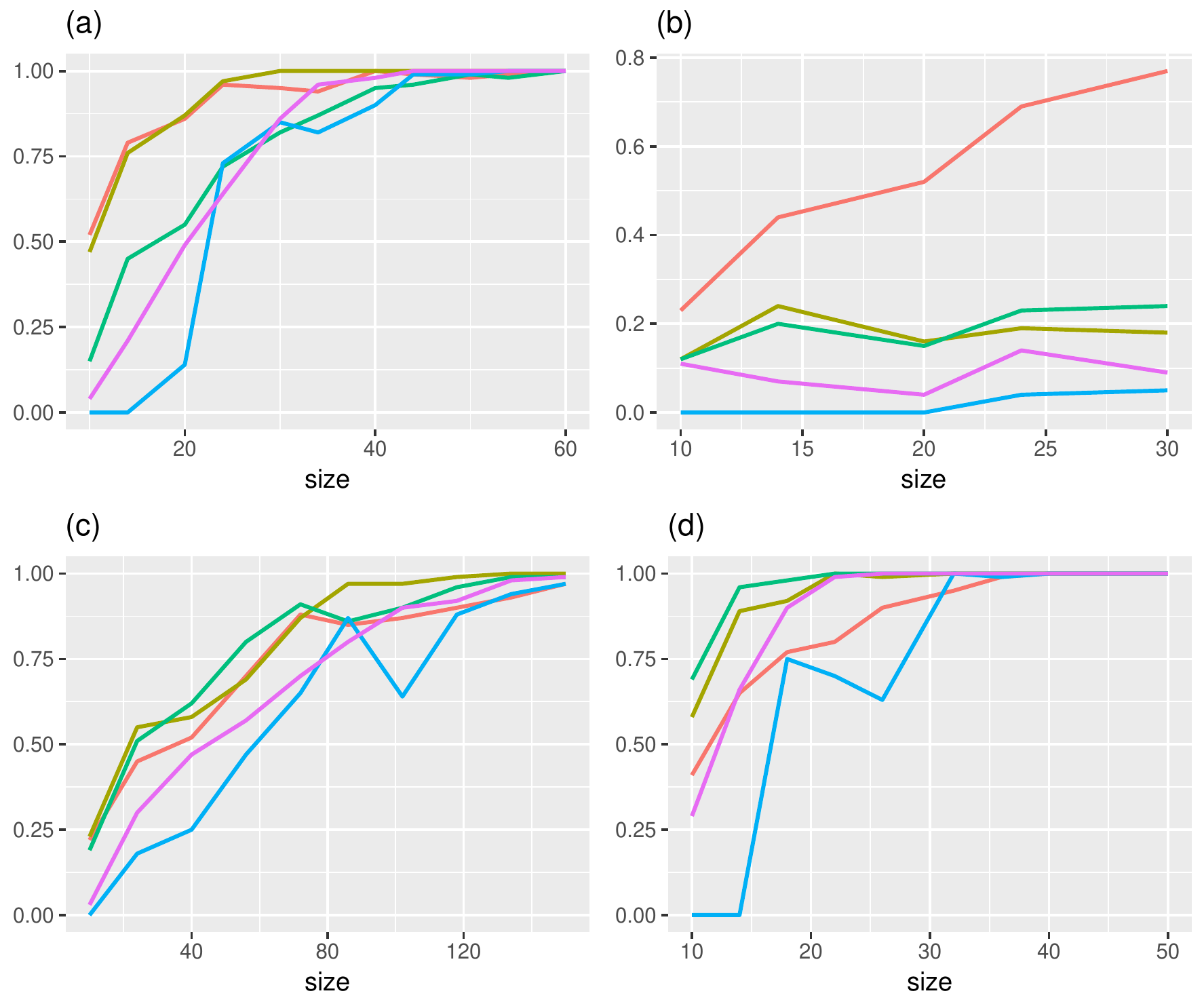}
\includegraphics[width=3in, height=0.3in]{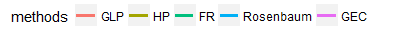}
\caption{Real data empirical power comparisons: (a) Ionosphere data; (b) Kyphosis; (c) Phoneme Data; and (d) Leukemia Data.}
\label{fig:real1}
\end{figure}

\subsection{Analysis of Benchmark Datasets}
Comparisons using the above methods are also done on benchmark datasets. For each dataset, we performed the testing procedures on subsets of the whole data, so that we can approximate the rejection power. For our resampling, (i) several sub-sample sizes are specified, and we pick randomly and evenly from the two groups in full data to form the subsets (ii) Each resampling is repeated 100 times. The results are shown on Figure \ref{fig:real1}.

The first example studied Ionosphere data found on the UCI machine learning repository. The dataset is comprised of $d=34$ features and $n=351$ instances grouped into two classes: good or bad radar returns with $n_1=225$, $n_2=126$. Re-samplings are performed with subsample size ranging from $10$ to $60$. Figure \ref{fig:real1} (a) shows the proposed method performs on par with the Hamiltonian path method, while others fall behind in power.
The next example is kyphosis laminectomy data, available in \texttt{gam} R package. It has discrete covariates whose range of values differ greatly. We observe $17$ instances from the kyphosis group and $64$ instances from the normal group.  Figure \ref{fig:real1} (b) shows that all the existing tests yield noticeably poor performance. Even more surprisingly, the power of these tests does not increase with sample size. Our proposed test significantly outperforms all other competitors here. Next, we consider the phoneme dataset with  two groups `aa' and `ao.' The data have a dimension of $d=256$, and re-sampling subsample sizes range from $10$ to $150$. Here edge-count and Hamiltonian path methods show better performance, and our method also performs well, as shown in Figure \ref{fig:real1} (c). Our final example is the Leukemia cancer gene expression data with $d=7128$ genes in $n_1=47$ ALL (Acute lymphoblastic leukemia) samples and $n_2=25$ AML (Acute myeloid leukemia) samples. Data are re-sampled using total sample sizes from $10$ to $50$. In the case of small subset size the competing methods methods show higher power, with the exception of Rosenbaum's test. Nevertheless, for moderately large sample sizes all methods show equal power. The excellent performance of Friedman and Rafsky's method for this dataset can be easily understood from GLP table \ref{table3:leuk}, which finds only the first component to be significant, i.e., the location-only alternative.
\subsection{K-sample: Brain Data and Enhanced Predictive Model} \label{sec:brain}
The brain data \citep{pomeroy02} contains $n = 42$ samples from $d=5597$ gene expression profiles spanning $k=5$ different tumor classes of the central nervous system with group sizes: $n_1=n_2=n_3=10$, $n_4=8$ and $n_5=4$. This dataset is available at http://www.broadinstitute.org/mpr/CNS. We use this dataset to illustrate our method's performance in testing $k$-sample problems where other methods are not applicable. 

To start with, we have no prior knowledge about the possible alternatives. Thus we first compute the component-wise p-values using our $k$-sample learning algorithm described in Sec \ref{sec:algo}. The result is shown in Table \ref{table:brain}, which 
finds GLP orders $1$ and $2$ as the informative directions. Following step 7b of our algorithm, we then fuse the significant LP-graph kernels to construct the super-kernel $W^{\LP}=\sum_{\ell=1}^2 W_{\ell}^{\LP}$. Applying our spectral graph correlation algorithm on this specially tailored kernel $W^{\LP}$ yields the final row of Table \ref{table:brain}. Combining all of these, we infer that the high-dimensional gene-expression distributions differ in locations and scales in five different tumor classes. 

\begin{minipage}[t][][t]{0.48\textwidth}
\centering
\captionof{table}{Leukemia GLP Chart}
\begin{tabularx}{.95\linewidth}{YYY}
\toprule
Component & GLP &  p-value\\
\midrule
1* & 0.209 & $1.04\times10^{-4}$\\
2 & 0.022 & 0.207\\
3 & 0.002 & 0.703\\
4 & 0.033 & 0.121\\
\midrule
Overall & $0.209$& $ 1.04\times10^{-4}$ \\ 
\bottomrule
\end{tabularx}~~~~~
\label{table3:leuk}
\end{minipage}
\begin{minipage}[t][][t]{0.48\textwidth}
\centering
\captionof{table}{GLP chart for Brain Data}
\begin{tabularx}{\linewidth}{YYY}
\toprule
Component & GLP & p-value\\
\midrule
1* & 0.235 & 0.021\\
2* & 1.933 & $5.83\times10^{-30}$\\
3 & 0.070 & 0.760\\
4 & 0.070 & 0.760\\
\midrule
Overall&$0.233$ &$ 0.022$\\
\bottomrule
\end{tabularx}
\label{table:brain}
\end{minipage}
\vskip.55em
Consequently, our technique provides the global confirmatory testing result along with the insights into the possible reasons for rejecting the null hypothesis. This additional information, which is absent from extant technologies, allows us to go further than testing by designing enhanced predictive algorithms. To investigate this assertion, we designed two learning schemes based on two different feature matrices: (i) the original $X \in \mathbb{R}^{n\times d}$, and the data-adaptive (ii) $\mathcal{T}\in \mathbb{R}^{n\times 2d}$ which is $[\mathbb{T}_1\mid \mathbb{T}_2]$, where the $j$th column of the matrix $\mathbb{T}_\ell$ is simply $T_\ell(x;\wtF_j)$--the $\ell$-th LP-transform of covariate $X_j$. We use $75\%$ of the original data as a training set and the rest as testing set; we repeat this process $250$ times. The models are fitted by multinomial lasso-logistic regression; leave-one-out cross validation is used to select the tuning parameter. For comparison we use multi-class log-loss error given by $-\frac{1}{n_{\rm test}}\sum_{i=1}^{n_{\rm test}}\sum_{j=1}^ k y_{ij} \log(p_{ij})$, where $n_{\rm test}$ is the size of the test set, $k$ is the number of groups,  $\log$ is the natural logarithm, $y_{ij}$ is $1$ if observation $i$ belongs to class $j$ and $0$ otherwise; $p_{ij}$ is the predicted probability that observation $i$ belongs to class $j$. Figure \ref{fig:brainloglos} indicates that we can gain efficiency by incorporating LP-learned features into classification model-building process. This aspect of our modeling makes it more powerful than merely testing the hypothesis. The user can use the output of our algorithm to upgrade the baseline predictive model. For discussion on \texttt{R} implementation, see Supplementary Materials S13.

\begin{figure}[t]
\centering
\includegraphics[width=0.5\textwidth,keepaspectratio,trim=2cm 1.5cm 2cm 2cm]{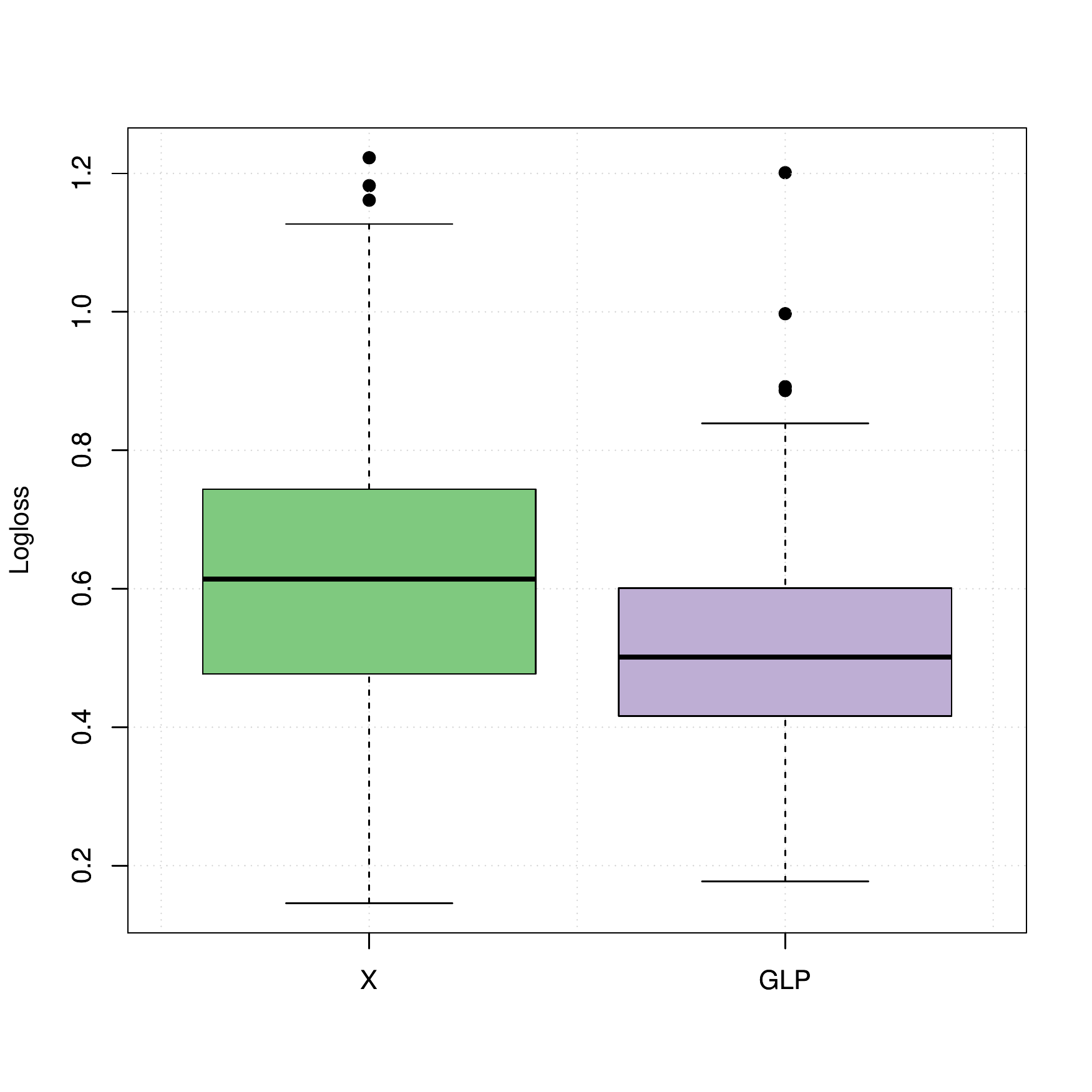}
\vskip.5em
\caption{Logarithmic loss of the multinomial logistic regression with $k=5$ classes, based on (left) original data matrix $X$ vs. (right) data-adaptive LP-transform matrix $\mathcal{T}$. }
\label{fig:brainloglos}
\end{figure}
\section{Discussion}
\label{sec:conclude}
This article provides some new graph-based modeling strategies for designing a nonparametric $k$-sample learning algorithm that is robust, increasingly automatic, and continues to work even when the dimension of the data is larger than the sample size. More importantly, it comes with an exploratory interface, which not only provides more insight into the problem but also can be utilized for developing a better predictive model at the next phase of data-modeling upon rejecting $H_0$. And for that reason, we call it a $k$-sample statistical learning problem--beyond simple testing, incorporating three modeling cultures: confirmatory, exploratory and predictive. To sum up, we must say that in designing the proposed algorithm, our priority has been flexibility as well as interpretation (D1-D5), in hopes of making it readily usable for applied data scientists. Our success comes from merging the strength of modern nonparametric statistics with the spectral graph analysis.
\section*{acknowledgement}
The authors would like to thank the editor, associate editor, and reviewers for their insightful comments. We dedicate this paper to Jerome H. Friedman on the occasion of his 80th birthday and in recognition of his pioneering work Friedman \& Rafsky(1979), which inspired this research.

\section*{Supplementary material}
\label{SM} 
Supplementary material available at \Bka\ online includes additional numerical and theoretical discussions. All datasets and the computing codes are available via open source \texttt{R}-software package \texttt{LPKsample}, available online at \mbox{https://cran.r-project.org/package=LPKsample}.
\appendix
\appendixone
\section*{Appendix}
The Appendix section contains proofs of the main Theorems and some additional remarks on the methodological front.
\subsection{Proof of Theorem 1}
\label{appendix:proof1}
\vskip.5em
 Recall $Y$ is binary with $\tp_Y(0)=n_1/n$ and $\tp_Y(1)=n_2/n$. Our goal is to find an explicit expression for the 
\[T_1(y;\wtF_Y)~=~\dfrac{\sqrt{12}\big\{\tFm_Y(y) - 1/2\big\}}{\sqrt{1-\sum_{y\in \mathscr{U}}\widetilde{p}_Y^3(y)}}.\]
We start by deriving the expression for the mid-distribution transform $\tFm_Y(y)= \wtF_Y(y)-\frac{1}{2} \tp_Y(y)$:
\beq \tFm_Y(y_i) ~=~ \left\{ \begin{array}{rl}
 -\dfrac{n_1}{2n} &\mbox{for $y_i=0$} \\[.88em]
 1-\dfrac{n_2}{2n} &\mbox{for $y_i=1$.~~~~~~~~~~~~~\,}
       \end{array} \right. \eeq
Next we determine the re-normalizing factor $1-\sum_{y\in 0,1}\widetilde{p}_Y^3(y)=3n_1n_2/n^2$. Combining previous two results we obtain the empirical LP-basis for $Y$ as
\beq T_1(y_i;\wtF_Y) ~=~ \left\{ \begin{array}{rl}
 -\sqrt{\dfrac{n_2}{n_1}} &\mbox{for $i=1,\ldots,n_1$} \\[.88em]
 \sqrt{\dfrac{n_1}{n_2}} &\mbox{for $i=n_1+1,\ldots,n$.~~~~~~~~~~~~~\,}
       \end{array} \right. \eeq
For $X$ continuous, we now aim to derive its first two empirical LP-basis. As we will see these basis functions have a direct connection with ranks. Now note that $\tFm_X(x_i)=\frac{R_i}{n} - \frac{1}{2n}$, where $R_i={\rm rank}(x_i)$. Hence we immediately have
\beq 
T_1(x_i;\wtF_X) = \sqrt{\dfrac{12}{n^2-1}} \Big(R_i - \dfrac{n+1}{2}\Big).
\eeq
This matches with the expression of $T_1(x;\wtF_X)$ as given in \eqref{eq:xLP}, up to a negligible factor. Perform Gram-Schmidt orthonormalization of $\{T_1(x;\wtF_X), T_1^2(x;\wtF_X)\}$ to obtain the second empirical LP-basis of $X$. By routine calculation, we have 
\beq
T_2(x_i;\wtF_X)\,=\,6\sqrt{5}\left\{ \big( \tFm_X(x_i) - 1/2\big)^2 - \frac{1}{12}\right\}, ~~~~\text{for}\,~ i=1,\ldots,n.
\eeq
Substituting the mid-transform function yields the desired result \eqref{eq:xLP}. This completes the proof. \qed
\subsection{Proof of Theorem 2}
\label{appendix:proof2}
Applying Definition \ref{def:lpc}, we have
\[ \hLP[1,1;Y,X]\,=\,n^{-1}\sum_{i=1}^n  T_1(y_i;\wtF_Y) T_1(x_i;\wtF_X).\]
Substitute the expressions for the empirical LP-basis functions from Theorem 1 to verify that
\beq \label{e:proof2}
\hLP[1,1;Y,X]\,=\, \dfrac{\sqrt{12}}{n^2} \left\{ -\sqrt{\frac{n_2}{n_1}}\, \sum_{i=1}^{n_1} \Big( R_i - \frac{n+1}{2} \Big) \,+\,  \sqrt{\frac{n_1}{n_2}}\,  \sum_{i=n_1+1}^n \Big( R_i - \frac{n+1}{2} \Big)\right\},
\eeq
which after some algebraic manipulation can be re-written as
\beq \label{e:proof2-2}
\hLP[1,1;Y,X]\,=\, \dfrac{\sqrt{12}}{n^2 \sqrt{n_1n_2}}  \left\{ n_1  \sum_{i=n_1+1}^n R_i - n_2\Big (  \frac{n(n+1)}{2} -   \sum_{i=n_1+1}^n R_i\Big)\right\}.
\eeq
Complete the proof by noting that \eqref{e:proof2-2} is in fact 
\[\sqrt{n}\hLP[1,1;Y,X]  ~= ~\sqrt{\dfrac{12}{n\, n_1 n_2}} \left\{ \sum_{i=n_1+1}^{n} R_i\,-\, \dfrac{n_2(n+1)}{2}\right\},\]
which is equivalent to the standardized Wilcoxon statistic up to a negligible factor $\sqrt{\frac{n+1}{n}}$.

To derive the LP-representation of Mood statistic, we start with
\[ \hLP[1,2;Y,X]\,=\,n^{-1}\sum_{i=1}^n T_1(y_i;\wtF_Y) T_2(x_i;\wtF_X).\]
Proceeding as before we have
\beq \label{e:proof2-3}
\hLP[1,2;Y,X]={\small \dfrac{6\sqrt{5}}{n^3}\left[-\sqrt{\frac{n_2}{n_1}}\sum_{i=1}^{n_1}\Big\{  \Big( R_i - \frac{n+1}{2}\Big)^2 -   \frac{\sqrt{5}}{2} \Big\}+\sqrt{\frac{n_1}{n_2}}\sum_{i=n_1+1}^n \Big\{\Big( R_i - \frac{n+1}{2} \Big)^2 -\frac{\sqrt{5}}{2} \Big\}\right]}.
\eeq
Routine calculations show that \eqref{e:proof2-3} can be reduced to
\beq \sqrt{n}\hLP[1,2;Y,X] ~= ~\sqrt{\dfrac{180}{n^3\, n_1 n_2}} \sum_{i=n_1+1}^{n}\left\{ \left( R_i - \dfrac{n+1}{2} \right)^2\,-\, \dfrac{n^2+2}{12}             \right\},
\eeq
which is equivalent to the Mood statistic up to an asymptotically negligible factor. This proves the claim. \qed
\subsection{Proof of Theorem 5}
\label{appendix:proof5}

Since under independence sample LP-comeans has the following weighted-average representation (where weights are marginal probabilities)
\beq \label{eq:slp}
\hLP[j,\ell;Y,Z]~=\,\mathop{\sum\sum}_{1 \le i_1,i_2 \le k} \widetilde{p}(i_1;Y)  \widetilde{p}(i_2;Z)\ T_j(i_1;\wtF_Y) T_\ell(i_2;\wtF_Z), ~~\text{$j,\ell \in \{1,\ldots,k-1\}$,}
\eeq
it is straightforward to show that in large samples the $\hLP[j,\ell;Y,Z]$ are independent and normally distributed by confirming $|\sqrt{n}\hLP[j,\ell;Y,Z]|^2$ is the score statistic for testing $H_0:\LP[j,\ell;Y,Z]=0$ against $\LP[j,\ell;Y,Z] \neq 0$. 
\subsection{Additional Remarks}
\label{appendix:rems}
\vskip.35em
Remark R1.~ \textit{Spectral relaxation}: Spectral clustering converts the intractable discrete optimization problem of graph partitioning \eqref{eq:ncut3} into a 
computationally manageable eigenvector problem \eqref{eq:ncutRelax}. However, the eigenvectors of the Laplacian matrix $U_{n\times k}$ will not in general be the desired piecewise constant form \eqref{eq:ncut2}. Thus, naturally, we seek a piecewise constant matrix $\widetilde{\Psi}$ closest to the ``relaxed'' solution $U$, up to a rotation by minimizing the following squared Frobenius norm: $\| UU^T - \widetilde{\Psi}\widetilde{\Psi}^T\|^2_F$. In an important result \citet[Theorem 2]{zhang2008multiway} showed that minimizing this cost function is equivalent to performing k-means clustering on the rows of $U$. This justifies why spectral clustering scheme is the closed tractable solution of the original NP-hard normalized cut problem, as described in Section \ref{sec:ncut}. 
\vskip.45em
Remark R2.~ \textit{Computational complexity}:
An apparent limitation of all graph-based methods is that the runtime scales weakly with the sample size $n$; see Supplementary section S12. It seems worthwhile for future research to examine how to increase the speed of graph-based methods. The techniques of  \cite{deepgraph2019} may be very useful in doing this.
\vskip.45em
Remark R3.~ \textit{Joint covariate balance}: As a reviewer pointed out, researchers can use the proposed method for evaluating covariate balance in causal inference. The proposed GLP technology could be perfectly suitable for this task because:  (i) it can tackle multivariate mixed data that is prevalent in observational studies, (ii) it goes beyond the simple mean difference and captures the distributional balance across exposed and non-exposed groups, and finally, (iii) it informs causal modeler the nature of imbalance--how the distributions are different within a principal stratum, which can be used to \textit{upgrade} the logistic regression-based propensity-score matching algorithm (cf. Section \ref{sec:brain}) to reduce the bias.



\begin{thebibliography}{24}
\expandafter\ifx\csname natexlab\endcsname\relax\def\natexlab#1{#1}\fi


\bibitem[{Biswas et~al.(2014)Biswas, Mukhopadhyay \& Ghosh}]{biswas2014}
\textsc{Biswas, M.}, \textsc{Mukhopadhyay, M.} \& \textsc{Ghosh, A.~K.} (2014).
\newblock A distribution-free two-sample run test applicable to
  high-dimensional data.
\newblock \textit{Biometrika} \textbf{101}, 913--926.

\bibitem[{Chanda(1963)}]{chanda1963}
\textsc{Chanda, K.} (1963).
\newblock On the efficiency of two-sample mann-whitney test for discrete
  populations.
\newblock \textit{The Annals of Mathematical Statistics} , 612--617.

\bibitem[{Chen et~al.(2017)Chen, Chen \& Su}]{chen2016weighted}
\textsc{Chen, H.}, \textsc{Chen, X.} \& \textsc{Su, Y.} (2017).
\newblock A weighted edge-count two-sample test for multivariate and object
  data.
\newblock \textit{Journal of the American Statistical Association
  (forthcoming)} \text{DOI: 10.1080/01621459.2017.1307757}.

\bibitem[{Chen \& Friedman(2017)}]{chen2016}
\textsc{Chen, H.} \& \textsc{Friedman, J.~H.} (2017).
\newblock A new graph-based two-sample test for multivariate and object data.
\newblock \textit{Journal of the American Statistical Association}
  \textbf{112}, 397--409.

\bibitem[{Chung(1997)}]{chung1997}
\textsc{Chung, F.~R.} (1997).
\newblock \textit{Spectral graph theory}, vol.~92.
\newblock American Mathematical Soc.

\bibitem[{Couillet et~al.(2016)Couillet, Benaych-Georges
  et~al.}]{couillet2016kernel}
\textsc{Couillet, R.}, \textsc{Benaych-Georges, F.} et~al. (2016).
\newblock Kernel spectral clustering of large dimensional data.
\newblock \textit{Electronic Journal of Statistics} \textbf{10}, 1393--1454.


\bibitem[{Friedman \& Rafsky(1979)}]{friedman1979}
\textsc{Friedman, J.~H.} \& \textsc{Rafsky, L.~C.} (1979).
\newblock Multivariate generalizations of the wald-wolfowitz and smirnov
  two-sample tests.
\newblock \textit{The Annals of Statistics} , 697--717.




\bibitem[{Hoeffding(1940)}]{Hoeff40}
\textsc{Hoeffding, W.} (1940).
\newblock Massstabinvariante korrelationstheorie.
\newblock \textit{Schriften des Mathematischen Seminars und des Instituts
  f$\ddot{u}$r Angewandte Mathematik der Universit$\ddot{a}$t Berlin}
  \textbf{5}, 179--233.

\bibitem[{Hollander et~al.(2013)Hollander, Wolfe \&
  Chicken}]{hollander2013book}
\textsc{Hollander, M.}, \textsc{Wolfe, D.~A.} \& \textsc{Chicken, E.} (2013).
\newblock \textit{Nonparametric statistical methods}.
\newblock John Wiley \& Sons.




\bibitem[{El~Karoui(2010)}]{karoui2010}
\textsc{El~Karoui, N.} (2010).
\newblock The spectrum of kernel random matrices.
\newblock \textit{The Annals of Statistics} \textbf{38}, 1--50.

\bibitem[{Mood(1940)}]{mood1940}
\textsc{Mood, A.~M.} (1940).
\newblock The distribution theory of runs.
\newblock \textit{The Annals of Mathematical Statistics} \textbf{11}, 367--392.

\bibitem[{Mukhopadhyay et~al.(2019)Mukhopadhyay \& Wang}]{deepgraph2019}
\textsc{Mukhopadhyay, S.}, \& \textsc{Wang, K.} (2019).
\newblock A Spectral Graph Analysis: A Unified Explanation and Modern Perspectives
\newblock \textit{preprint arXiv:1901.07090}.

\bibitem[{Oja \& Randles(2004)}]{oja2004}
\textsc{Oja, H.} \& \textsc{Randles, R.~H.} (2004).
\newblock Multivariate nonparametric tests.
\newblock \textit{Statistical Science} , 598--605.

\bibitem[{Pomeroy et~al.(2002)Pomeroy, Tamayo, Gaasenbeek \&
  Sturla}]{pomeroy02}
\textsc{Pomeroy, S.}, \textsc{Tamayo, P.}, \textsc{Gaasenbeek, M.} \&
  \textsc{Sturla, L.} (2002).
\newblock Prediction of central nervous system embryonal tumour outcome based
  on gene expression.
\newblock \textit{Nature} \textbf{415}, 436--442.

\bibitem[{Puri \& Sen(1971)}]{puri1993book}
\textsc{Puri, M.~L.} \& \textsc{Sen, P.~K.} (1971).
\newblock \textit{Nonparametric {M}ethods in {M}ultivariate {A}nalysis}.
\newblock John Wiley \& Sons, New York (US).


\bibitem[{Rosenbaum(2005)}]{rosenbaum2005}
\textsc{Rosenbaum, P.~R.} (2005).
\newblock An exact distribution-free test comparing two multivariate
  distributions based on adjacency.
\newblock \textit{Journal of the Royal Statistical Society: Series B
  (Statistical Methodology)} \textbf{67}, 515--530.



  
\bibitem[{Shi \& Malik(2000)}]{shi2000}
\textsc{Shi, J.} \& \textsc{Malik, J.} (2000).
\newblock Normalized cuts and image segmentation.
\newblock \textit{Pattern Analysis and Machine Intelligence, IEEE Transactions
  on} \textbf{22}, 888--905.

\bibitem[{Subramanian et~al.(2005)Subramanian, Tamayo, Mootha, Mukherjee,
  Ebert, Gillette, Paulovich, Pomeroy, Golub, Lander
  et~al.}]{subramanian2005gene}
\textsc{Subramanian, A.}, \textsc{Tamayo, P.}, \textsc{Mootha, V.~K.},
  \textsc{Mukherjee, S.}, \textsc{Ebert, B.~L.}, \textsc{Gillette, M.~A.},
  \textsc{Paulovich, A.}, \textsc{Pomeroy, S.~L.}, \textsc{Golub, T.~R.},
  \textsc{Lander, E.~S.} et~al. (2005).
\newblock Gene set enrichment analysis: a knowledge-based approach for
  interpreting genome-wide expression profiles.
\newblock \textit{Proceedings of the National Academy of Sciences}
  \textbf{102}, 15545--15550.

\bibitem[{Von~Luxburg(2007)}]{von2007}
\textsc{Von~Luxburg, U.} (2007).
\newblock A tutorial on spectral clustering.
\newblock \textit{Statistics and computing} \textbf{17}, 395--416.

\bibitem[{Von~Luxburg et~al.(2008)Von~Luxburg, Belkin \&
  Bousquet}]{von2008consistency}
\textsc{Von~Luxburg, U.}, \textsc{Belkin, M.} \& \textsc{Bousquet, O.} (2008).
\newblock Consistency of spectral clustering.
\newblock \textit{The Annals of Statistics} \textbf{32}, 555--586.

\bibitem[{Wald \& Wolfowitz(1940)}]{wald1940}
\textsc{Wald, A.} \& \textsc{Wolfowitz, J.} (1940).
\newblock On a test whether two samples are from the same population.
\newblock \textit{The Annals of Mathematical Statistics} \textbf{11}, 147--162.


\bibitem[{Weiss(1960)}]{weiss1960two}
\textsc{Weiss, L.} (1960).
\newblock Two-sample tests for multivariate distributions.
\newblock \textit{The Annals of Mathematical Statistics} \textbf{31}, 159--164.

\bibitem[{Zhang et~al.(2008)Zhang, Jordan et~al.}]{zhang2008multiway}
\textsc{Zhang, Z.}, \textsc{Jordan, M.~I.} (2008).
\newblock Multiway spectral clustering: A margin-based perspective.
\newblock \textit{Statistical Science} \textbf{23}, 383--403.
  
\end{thebibliography}
\end{document}